\newtheorem{thm} {Theorem}
\newtheorem{asump} {Assumption}
\newtheorem{lemma} {Lemma}
\let\oldIEEEkeywords\IEEEkeywords
\def\IEEEkeywords{\oldIEEEkeywords\normalfont\bfseries\ignorespaces}
\newcommand{\thickhline}{
	\noalign {\ifnum 0=`}\fi \hrule height 1pt
	\futurelet \reserved@a \@xhline
}
\newcounter{parentalgorithm}
\begin{document}
	
	\title{Information-Sharing over Adaptive Networks with Self-interested Agents}

	\author{Chung-Kai~Yu,~\IEEEmembership{Student Member,~IEEE,}
		Mihaela~van~der~Schaar,~\IEEEmembership{Fellow,~IEEE,}
		and~Ali~H.~Sayed,~\IEEEmembership{Fellow,~IEEE}
		\thanks{Copyright (c) 2015 IEEE. Personal use of this material is permitted. However, permission to use this material for any other	purposes must be obtained from the IEEE by sending a request to	pubs-permissions@ieee.org.}
		\thanks{This work was supported in part by NSF grants CCF-1011918, CSR-1016081, and ECCS-1407712. An early short version of this work appeared in the conference publication~\cite{Yu132}. The authors are with the Department of Electrical Engineering, University of California, Los Angeles, CA 90095-1594 USA (e-mail: ckyuna@ucla.edu, \{mihaela,sayed\}@ee.ucla.edu).}}
	
	\maketitle
	
	\begin{abstract}
		We examine the behavior of multi-agent networks where information-sharing is subject to a positive
		communications cost over the edges linking the agents. We consider a general mean-square-error
		formulation where all agents are interested in estimating the same target vector. We first show that, in the absence of any incentives to cooperate, the optimal strategy for the agents is to behave in a selfish manner with each agent seeking the optimal solution independently of the other agents. Pareto inefficiency arises as a result of the fact that agents are not using historical data to predict the behavior of their neighbors and to know whether they will reciprocate and participate in sharing information. Motivated by this observation, we develop a reputation protocol to summarize the opponent's past actions into a reputation score, which can then be used to form a belief about the opponent's subsequent actions. The reputation protocol entices agents to cooperate and turns their optimal strategy into an action-choosing strategy that enhances the overall social benefit of the network. In particular, we show that when the communications cost becomes
		large, the expected social benefit of the proposed protocol outperforms the social benefit that is obtained by cooperative agents that always share data. We perform a detailed mean-square-error analysis of the evolution of the network over three domains: far field, near-field, and middle-field, and show that the network behavior is stable for sufficiently small step-sizes. The various theoretical results are illustrated by numerical simulations.
	\end{abstract}
	
	\begin{keywords}
		Adaptive networks, self-interested agents, reputation design, diffusion strategy, Pareto efficiency, mean-square-error analysis.
	\end{keywords}
	
	\IEEEpeerreviewmaketitle
	
	\section{Introduction}
	
	\IEEEPARstart{A}{daptive} networks enable agents to share information and to solve distributed optimization and inference tasks in an efficient and decentralized manner. In most prior works, agents are assumed to be cooperative and designed to follow certain distributed rules such as the consensus strategy (e.g.,~\cite{Braca08,Dimakis10,Sardellitti10,Olfati07,Kar09,Kar11,Nedic09,Xiao04,Boyd06}) or the diffusion strategy (e.g.,~\cite{Lopes08,Sayed14,Sayed132,Sayed142,Sayed143,Catt10,Chouvardas11,Takahashi10,Chen12}). These rules generally include a self-learning step to update the agents' estimates using their local data, and a social-learning step to fuse and combine the estimates shared by neighboring agents. 
	However, when agents are selfish, they would not obey the preset rules unless these strategies conform to their own interests, such as minimizing their own costs. 
	In this work, we assume that the agents can behave selfishly and that they, therefore, have the freedom to decide whether or not they want to participate in sharing information with their neighbors at every point in time. Under these conditions, the global social benefit for the network can be degraded unless a policy is introduced to entice agents to participate in the collaborative process despite their individual interests. In this article, we will address this difficulty in the context of adaptive networks where agents are continually subjected to streaming data, and where they can predict in real-time, from their successive interactions, how reliable their neighbors are and whether they can be trusted to share information based on their past history. This formulation is different from the useful work in~\cite{Gharehshiran13}, which considered one particular form of selfish behavior in the context of a game-theoretic formulation. In that work, the focus is on activating the self-learning and social learning steps simultaneously, and agents simply decide whether to enter into a sleep mode (to save energy) or to continue acquiring and processing data. In the framework considered in our work, agents always remain active and are continually acquiring data; the main question instead is to entice agents to participate in the collaborative information-sharing process regardless of their self-centered evaluations.  

	More specifically, we study the behavior of multi-agent networks where information-sharing is subject to a positive communication cost over the edges linking the agents. This situation is common in applications, such as information sharing over cognitive networks~\cite{Yu13}, online learning under communication bandwidth and/or latency constraints ~\cite{Filali09},\cite[Ch.~14]{Bekkerman11}, and over social learning networks when the delivery of opinions involves some costs such as messaging fees~\cite{Acemoglu11,Jadbabaie12,Zhao12C2}. In our network model, each agent is self-interested and seeks to minimize its own sharing cost \textit{and} estimation error. Motivated by the practical scenario studied in~\cite{Yu13}, we formulate a general mean-square error estimation problem where all agents are interested in estimating the same target parameter vector. Agents are assumed to be foresighted and to have bounded rationality~\cite{Gigerenzer02} in the manner defined further ahead in the article. Then, we show that if left unattended, the dominant strategy for all agents is for them not to participate in the sharing of information, which leads to networks operating under an inefficient Pareto condition. This situation arises because agents do not have enough information to tell beforehand if their paired neighbors will reciprocate their actions (i.e., if an agent shares data with a second agent, will the second agent reciprocate and share data back?) This prediction-deficiency problem follows from the fact that agents are not using historical data to predict other agents' actions.
	
	One method to deal with this inefficient scenario is to assume that agents adapt to their opponents' strategies and improve returns by forming some regret measures. In~\cite{Foster99}, a decision maker determines its action using a regret measure to evaluate the utility loss from the chosen action to the optimal action in the previous stage game. For multi-agent networks, a regret-based algorithm was proposed in~\cite{Gharehshiran13} and~\cite{Namvar13} for agents to update their actions based on a weighted loss of the utility functions from the previous stage games. However, these works assume myopic agents and formulate repeated games with fixed utility functions over each stage game, which is different from the scenario considered in this article where the benefit of sharing information over adaptive networks continually evolves over time. This is because, as the estimation accuracy improves and/or the communication cost becomes expensive, the return to continue cooperating for estimation purposes falls and thus the act of cooperating with other agents becomes unattractive and inefficient. In this case, the regret measures computed from the previous stage games may not provide an accurate reference to the current stage game.  
	
	A second useful method to deal with Pareto inefficient and non-cooperative scenarios is to employ reputation schemes (e.g.,~\cite{Mailath06,Xu12,Zhang12,Carter02}). In this method, foresighted agents use reputation scores to assess the willingness of other agents to cooperate; the scores are also used to punish non-cooperative behavior. For example, the works~\cite{Xu12,Zhang12} rely on discrete-value reputation scores, say, on a scale 1-10, and these scores are regularly updated according to the agents' actions. Similar to the regret learning references mentioned before, in our problem the utilities or cost functions of stage games change over time and evolve based on agents' estimates. Conventional reputation designs do not address this time variation within the payoff of agents, which will be examined more closely in our work. Motivated by these considerations, in Sec.~\ref{AdRe}, we propose a dynamic/adaptive reputation protocol that is based on the belief measure of future actions with real-time benefit predictions. 
	
	In our formulation, we assume a general random-pairing model similar to~\cite{Boyd06}, where agents are randomly paired at the beginning of each time interval. This situation could occur, for example, due to an exogenous matcher or the mobility of the agents. The paired agents are assumed to follow a diffusion strategy\cite{Sayed14,Sayed132,Sayed142,Sayed143}, which includes an adaptation step and a consultation step, to iteratively update their estimates. Different from conventional diffusion strategies, the consultation step here is influenced by the random-pairing environment and by cooperation uncertainty. The interactions among self-interested agents are formulated as successive stage games of two players using pure strategies. To motivate agents to cooperate with each other, we formulate an adaptive reputation protocol to help agents jointly assess the instantaneous benefit of depreciating information and the transmission cost of sharing information. The reputation score helps agents to form a belief of their opponent's subsequent actions. Based on this belief, we entice agents to cooperate and turn their best response strategy into an action choosing strategy that conforms to Pareto efficiency and enhances the overall social benefit of the network. 
	
	In the performance evaluation, we are interested in ensuring the mean-square-error stability of the network instead of examining equilibria as is common in the game theoretical literature since our emphasis is on adaptation under successive time-variant stage games. The performance analysis is challenging due to the adaptive behavior by the agents. For this reason, we pursue the mean-square-error analysis of the evolution of the network over three domains: far-field, near-field, and middle-field, and show that the network behavior is stable for sufficiently small step-sizes. We also show that when information sharing becomes costly, the expected social benefit of the proposed reputation protocol outperforms the social benefit that is obtained by cooperative agents that always share data.
	
	\textbf{Notation}: We use lowercase letters to denote vectors and scalars, uppercase letters for matrices, plain letters for deterministic variables, and boldface letters for random variables. All vectors in our treatment are column vectors, with the exception of the regression vectors, $\bm{u}_{k,i}$.

	\section{System Model}
	
	\subsection{Distributed Optimization and Communication Cost}
	
	Consider a connected network consisting of $N$ agents. When agents act independently of each other, each agent $k$ would seek to estimate the $M \times 1$ vector $w^o$ that minimizes an individual estimation cost function denoted by $J_k^\text{est}(w):\mathbb{C}^M \rightarrow \mathbb{R}$. We assume each of the costs $\{J_k^\text{est}(w)\}$ is strongly convex for $k=1,2,\ldots,N$, and that all agents have the same objective so that all costs are minimized at the common location $w^o \in \mathbb{C}^{M \times 1}$. 
	
	In this work, we are interested in scenarios where agents can be motivated to cooperate among themselves as permitted by the network topology. We associate an extended cost function with each agent $k$, and denote it by $J_k(w,a_k)$. In this new cost, the scalar $a_k$ is a binary variable that is used to model whether agent $k$ is willing to cooperate and share information with its neighbors. The value $a_k=1$ means that agent $k$ is willing to share information (e.g., its estimate of $w^o$) with its neighbors, while the value $a_k=0$ means that agent $k$ is not willing to share information. The reason why agents may or may not share information is because this decision will generally entail some cost. We consider the scenario where a positive transmission cost, $c_k>0$, is required for each act by agent $k$ involving sharing an estimate with any of its neighbors. By taking $c_k$ into consideration, the extended cost $J_k(w,a)$ that is now associated with agent $k$ will consist of the sum of two components: the estimation cost and the communication cost\footnote{We focus on the sum of the estimation cost and the communication cost due to its simplicity and meaningfulness in applications. Note that a possible generalization is to consider a penalty-based objective function $J^\text{est}_k(w)+p(J^\text{com}_k(a_k))$ for some penalty function $p(\cdot)$.}:
	\begin{align}
		\label{Jcoop}
		J_k (w,a_k) \triangleq J^\text{est}_k(w) + J^\text{com}_k(a_k)
	\end{align}
	where the latter component is modeled as
	\begin{align}
		\label{comc}
		J^\text{com}_k(a_k)\triangleq a_k c_k
	\end{align}
	We express the communication expense in the form (\ref{comc}) because, as described further ahead, when an agent $k$ decides to share information, it will be sharing the information with one neighbor at a time; the cost for this communication will be $a_k c_k$. With regards to the estimation cost, $J_{k}^\text{est}(w)$, this measure can be selected in many ways. One common choice is the mean-square-error (MSE) cost, which we adopt in this work. 
	
	At each time instant $i\geq 0$, each agent $k$ is assumed to have access to a scalar measurement $\bm{d}_k(i) \in \mathbb{C}$ and a $1 \times M$ regression vector $\bm{u}_{k,i} \in \mathbb{C}^{1 \times M}$ with covariance matrix $R_{u,k} \triangleq \mathbb{E}\bm{u}_{k,i}^* \bm{u}_{k,i}>0$. 
	The regressors $\{\bm{u}_{k,i}\}$ are assumed to have zero-mean and to be temporally white and spatially independent, i.e., 
	\begin{align}
		\mathbb{E} \bm{u}_{k,i}^* \bm{u}_{\ell,j}=R_{u,k} \delta_{k \ell} \delta_{i j}
	\end{align}
	in terms of the Kronecker delta function. The data $\{\bm{d}_k(i),\bm{u}_{k,i}\}$ are assumed to be related via the linear regression model:
	\begin{align}
		\label{observe}
		\bm{d}_k(i)=\bm{u}_{k,i} w^o+\bm{v}_k(i)
	\end{align}
	where $w^o$ is the common target vector to be estimated by the agents. In (\ref{observe}), the variable $\bm{v}_k(i) \in \mathbb{C}$ is a zero-mean white-noise process with power $\sigma_{v,k}^2$ that is assumed to be spatially independent, i.e.,
	\begin{align}
		\mathbb{E}\bm{v}_k^*(i)\bm{v}_\ell(j)=\sigma_{v,k}^2\delta_{k\ell}\delta_{ij}
	\end{align}
	We further assume that the random processes $\bm{u}_{k,i}$ and $\bm{v}_\ell(i)$ are spatially and temporally independent for any $k$, $\ell$, $i$, and $j$. Models of the form (\ref{observe}) are common in many applications, e.g., channel estimation, model fitting, target tracking, etc (see, e.g.,~\cite{Sayed143}).
	
	Let $\bm{w}_{k,i-1}$ denote the estimator for $w^o$ that will be available to agent $k$ at time $i-1$. We will describe in the sequel how agents evaluate these estimates. The corresponding {\em a-priori}
	estimation error is defined by
	\begin{align}
		\label{apriori}
		\bm{e}_{a,k}(i) \triangleq \bm{d}_k(i)-\bm{u}_{k,i}\bm{w}_{k,i-1}
	\end{align}
	and it measures how close the weight estimate matches the measurements $\{\bm{d}_k(i),\bm{u}_{k,i}\}$ to each other. In view of model (\ref{observe}), we can also write
	\begin{align}
		\bm{e}_{a,k}(i)= \bm{u}_{k,i}\widetilde{\bm{w}}_{k,i-1}+\bm{v}_k(i)
	\end{align}
	in terms of the estimation error vector
	\begin{align}
		\widetilde{\bm{w}}_{k,i-1} \triangleq w^o-\bm{w}_{k,i-1}
	\end{align}
	Motivated by these expressions and model (\ref{observe}), the instantaneous MSE cost that is
	associated with agent $k$ based on the estimate from time $i-1$ is given by
	\begin{align}
		\label{qualcost1}
		J_k^\text{est}(\bm{w}_{k,i-1}) &\triangleq \mathbb{E}|\bm{e}_{a,k}(i)|^2 \notag\\
		&=\mathbb{E}|\bm{d}_k(i)-\bm{u}_{k,i}\bm{w}_{k,i-1}|^2 \notag\\
		&=\mathbb{E}\|\widetilde{\bm{w}}_{k,i-1}\|^2_{R_{u,k}}\;+\;\sigma_{v,k}^2
	\end{align}
	Note that this MSE cost conforms to the strong convexity of $J_k^\text{est}$ as we mentioned before.
	Combined with the action by agent $k$, the extended instantaneous cost at agent $k$ that is based on the prior estimate, $\bm{w}_{k,i-1}$, is then given by:
	\begin{align}
		\label{qualcost2}
		J_k(\bm{w}_{k,i-1},a_k) = \mathbb{E}|\bm{e}_{a,k}(i)|^2 + a_k c_k
	\end{align}
	
	\subsection{Random-Pairing Model}
	\label{RPM}
	We denote by $\mathcal{N}_k$ the neighborhood of each agent $k$, including itself. We consider a random pairing protocol for agents to share information at the beginning of every iteration cycle. The pairing procedure can be executed either in a centralized or distributed manner. Centralized pairing schemes can be used when an online server randomly assigns its clients into pairs as in crowdsourcing applications~\cite{Zhang12,Xu12}, or when a base-station makes pairing decisions for its mobile nodes for packet relaying~\cite{Yang09}. Distributed paring schemes arise more naturally in the context of economic and market transactions~\cite{Aliprantis07}. In our formulation, we adopt a distributed pairing structure that takes neighborhoods into account when selecting pairs, as explained next.
	
	We assume each agent $k$ has bi-directional links to other agents in $\mathcal{N}_k$ and that agent $k$ has a positive probability to be paired with any of its neighbors. Once two agents are paired, they can decide on whether to share or not their instantaneous estimates for $w^o$. We therefore model the result of the random-pairing process between each pair of agents $k$ and $\ell\in \mathcal{N}_k \setminus \{k\}$ as temporally-independent Bernoulli random processes defined as:
	\begin{align}
		\label{randompair}
		\mathbf{1}_{k \ell}(i)= \mathbf{1}_{\ell k }(i)=
		\begin{cases}
			1, & \text{with probability}~ p_{k \ell}=p_{\ell k}\\
			0, & \text{otherwise}
		\end{cases}
	\end{align}
	where $\mathbf{1}_{k \ell}(i)=1$ indicates that agents $k$ and $\ell$ are paired at time $i$ and $\mathbf{1}_{k \ell}(i)=0$ indicates that they are not paired. 
	We are setting $\mathbf{1}_{k \ell}(i)=\mathbf{1}_{\ell k}(i)$ because these variables represent the same event: whether agents $k$ and $\ell$ are paired, which results in $p_{k\ell}=p_{\ell k}$. For $\ell \notin \mathcal{N}_k$, we have $\mathbf{1}_{k \ell}(i)=0$ since such pairs will never occur. For convenience, we use $\mathbf{1}_{k k}(i)$ to indicate the event that agent $k$ is not paired with any agent $\ell\in \mathcal{N}_k \setminus \{k\}$ at time $i$, which happens with probability $p_{kk}$. Since each agent will pair itself with at most one agent at a time from its neighborhood, the following properties are directly obtained from the random-pairing procedure:
	\begin{align}
		&\sum_{\ell \in \mathcal{N}_k} \mathbf{1}_{k \ell}(i) = 1,~~~\sum\limits_{\ell \in \mathcal{N}_k} p_{k \ell} = 1 \\
		&\mathbf{1}_{k \ell}(i) \mathbf{1}_{k q}(i)=0, ~~ \text{for }\ell \neq q
	\end{align}
	We assume that the random pairing indicators $\{\mathbf{1}_{k \ell}(i)\}$ for all $k$ and $\ell$ are independent of the random variables $\{\bm{u}_{k,t}\}$ and $\{\bm{v}_k(t)\}$ for any time $i$ and $t$.
	For example, a widely used setting in the literature is the fully-pairing network, which assumes a fully-connected network topology~\cite{Ellison94,Zhang12}, i.e., $\mathcal{N}_k=\mathcal{N}$ for every agent $k$, where $\mathcal{N}$ denotes the set of all agents. The size $N=|\mathcal{N}|$ is assumed to be even and every agent is uniformly paired with exactly one agent in the network. Therefore, we have $N/2$ pairs at each time instant and the random-pairing probability becomes
	\begin{align}
		\label{fully}
		p_{k \ell} = 
		\begin{cases}
			\frac{1}{N-1}, & \text{for~} \ell \neq k \\
			0, & \text{for~} \ell = k
		\end{cases}
	\end{align}
	We will not be assuming fully-connected networks or fully-paired protocols and will deal more generally with networks that can be sparsely connected. Later in Sec. IV we will demonstrate a simple random-pairing protocol which can be implemented in a fully distributed manner.
		
	\subsection{Diffusion Strategy}
	
	\setcounter{equation}{24}
	\begin{figure*}[!b]
		\vspace{-1.5mm}
		\normalsize
		\hrulefill
		\begin{align}
			\label{a_sel}
			J_{k,i}^\infty \left[a_{k \ell}(i),a_{\ell k}(i)|\bm{w}_{k,i-1}\right] &\triangleq 
			\sum_{t=i}^{\infty} \delta_k^{t-i} \mathbb{E} 
			\Big[
			J_{k}(\bm{a}_{k \ell}(t),\bm{a}_{\ell k}(t)) \Big{|}
			\bm{w}_{k,i-1}, \bm{a}_{k \ell}(i)=a_{k \ell}(i), \bm{a}_{\ell k}(i)=a_{\ell k}(i) \Big] \notag \\
			&=\sum_{t=i}^{\infty} \delta_k^{t-i} \mathbb{E} 
			\left[
			J_{k}^\text{act}(\bm{a}_{\ell k}(t))+ \bm{a}_{k \ell}(t) c_k\Big{|} 
			\bm{w}_{k,i-1}, \bm{a}_{k \ell}(i)=a_{k \ell}(i), \bm{a}_{\ell k}(i)=a_{\ell k}(i) \right]
		\end{align}
	\end{figure*}
	\setcounter{equation}{14}
	
	Conventional diffusion strategies assume that the agents are cooperative (or obedient) and continuously share information with their neighbors as necessary~\cite{Sayed132,Sayed143,Sayed14}. 
	In the adapt-then-combine (ATC) version of diffusion adaptation, each agent $k$ updates its estimate, $\bm{w}_{k,i}$,  according to the following relations:
	\begin{align}
		\label{ATC_A}
		\bm{\psi}_{k,i} &= \bm{w}_{k,i-1} + \mu \bm{u}^*_{k,i}[\bm{d}_k(i)-\bm{u}_{k,i}\bm{w}_{k,i-1}] \\
		\label{genATC2}
		\bm{w}_{k,i} &= \sum_{\ell \in \mathcal{N}_k} \alpha_{\ell k}\bm{\psi}_{\ell,i}
	\end{align}
	where $\mu>0$ is the step-size parameter of agent $k$, and the $\{\alpha_{\ell k},\;\ell\in{\cal N}_k\}$ are nonnegative combination coefficients that add up to one.
	In implementation (\ref{ATC_A})--(\ref{genATC2}), each agent $k$ computes an intermediate estimate $\bm{\psi}_{k,i}$ using its local data, and subsequently fuses the intermediate estimates from its neighbors. For the combination step (\ref{genATC2}), since agent $k$ is allowed to interact with only one of its neighbors, then we rewrite (\ref{genATC2}) in terms of a single coefficient $0\leq \alpha_k \leq 1$ as follows:
	\begin{align}
		\label{ranATC2}
		\bm{w}_{k,i} = \begin{cases} 
			\alpha_k \bm{\psi}_{k,i} + (1-\alpha_k) \bm{\psi}_{\ell,i},\!\!\!&\text{if}~\mathbf{1}_{k \ell} (i)=1~\text{for some}~ \ell \neq k \\
			\bm{\psi}_{k,i},\!\!\!&\text{otherwise}
		\end{cases}
	\end{align}
	We can capture both situations in (\ref{ranATC2}) in a single equation as follows:
	\begin{align}
		\label{ATC_C}
		\bm{w}_{k,i} &= \alpha_k \bm{\psi}_{k,i} + (1-\alpha_k)\sum_{\ell \in \mathcal{N}_k} \mathbf{1}_{k \ell} (i) \bm{\psi}_{\ell,i}
	\end{align}
	In formulation (\ref{ATC_A}) and (\ref{ATC_C}), it is assumed that once agents $k$ and $\ell$ are paired, they share information according to (\ref{ATC_C}). 
	
	Let us now incorporate an additional layer into the algorithm in order to model instances of selfish behavior. When agents behave in a selfish (strategic) manner, even when agents $k$ and $\ell$ are paired, each one of them may still decide (independently) to refuse to share information with the other agent for selfish reasons (for example, agent $k$ may decide that this cooperation will cost more than the benefit it will reap for the estimation task). To capture this behavior, we use the specific notation $\bm{a}_{k\ell}(i)$, instead of $\bm{a}_k(i)$, to represent the action taken by agent $k$ on agent $\ell$ at time $i$, and similarly for $\bm{a}_{\ell k}(i)$. Both agents will end up sharing information with each other only if $\bm{a}_{k\ell}(i)=\bm{a}_{\ell k}(i)=1$, i.e., only when both agents are in favor of cooperating once they have been paired. We set $\bm{a}_{k k}(i)=1$ for every time $i$. We can now rewrite the combination step (\ref{ATC_C}) more generally as:
	\begin{align}
		\label{ATC2} \bm{w}_{k,i} = \alpha_k \bm{\psi}_{k,i} + (1-\alpha_k) \sum_{\ell \in \mathcal{N}_k} &\mathbf{1}_{k \ell} (i) [ \bm{a}_{\ell k} (i) \bm{\psi}_{\ell,i} + \notag \\
		&~~~~~~~~ 
		\left(1-\bm{a}_{\ell k}(i)\right)\bm{\psi}_{k,i}]
	\end{align}
	From (\ref{ATC2}), when agent $k$ is not paired with any agent at time $i$ ($\mathbf{1}_{k k} (i) = 1$), we get $\bm{w}_{k,i} = \bm{\psi}_{k,i}$. On the other hand, when agent $k$ is paired with some neighboring agent $\ell$, which means $\mathbf{1}_{k \ell} (i) = 1$, we get
	\begin{align}
		\bm{w}_{k,i} &= \alpha_k \bm{\psi}_{k,i} + (1-\alpha_k) \left[ \bm{a}_{\ell k} (i) \bm{\psi}_{\ell,i} + \left(1-\bm{a}_{\ell k}(i)\right)\bm{\psi}_{k,i} \right] 
	\end{align}
	It is then clear that $\bm{a}_{\ell k}(i)=0$ results in $\bm{w}_{k,i} = \bm{\psi}_{k,i}$, while $\bm{a}_{\ell k}(i)=1$ results in a combination of the estimates of agents $k$ and $\ell$. In other words, when $\mathbf{1}_{k \ell} (i) = 1$:
	\begin{align}
		\label{noact}
		\bm{w}_{k,i} = \begin{cases} 
			\bm{\psi}_{k,i},&\text{if}~\bm{a}_{\ell k}(i)=0 \\
			\alpha_k \bm{\psi}_{k,i} + (1-\alpha_k) \bm{\psi}_{\ell,i},&\text{if}~\bm{a}_{\ell k}(i)=1
		\end{cases}
	\end{align}
	In the sequel, we assume that agents update and combine their estimates using (\ref{ATC_A}) and (\ref{ATC2}). One important question to address is how the agents determine their actions $\{\bm{a}_{k \ell}(i)\}$.
	
	\section{Agent Interactions}
	\label{formulation}
	
	When an arbitrary agent $k$ needs to decide on whether to set its action to $\bm{a}_{k \ell}(i)=1$ (i.e., to cooperate) or $\bm{a}_{k \ell}(i)=0$ (i.e., not to cooperate), it generally cannot tell beforehand whether agent $\ell$ will reciprocate. In this section, we first show that when self-interested agents are boundedly rational and incapable of transforming the past actions of neighbors into a prediction of their future actions, then the dominant strategy for each agent will be to choose noncooperation. Consequently, the entire network becomes noncooperative. Later, in Sec.~\ref{AdRe}, we explain how to address this inefficient scenario by proposing a protocol that will encourage cooperation. 
	
	\subsection{Long-Term Discounted Cost Function}
	To begin with, let us examine the interaction between a pair of agents, such as $k$ and $\ell$, at some time instant $i$ ($\mathbf{1}_{k\ell}(i)=1$). We assume that agents $k$ and $\ell$ simultaneously select their actions $\bm{a}_{k \ell}(i)$ and $\bm{a}_{\ell k}(i)$ by using some pure strategies (i.e., agents set their action variables by using data or realizations that are available to them, such as the estimates $\{\bm{w}_{k,i-1},\bm{w}_{\ell,i-1}\}$, rather than select their actions according to some probability distributions)\footnote{In our scenario, the discrete action set $\bm{a}_{k \ell}(i) \in \{0,1\}$ will be shown to lead to threshold-based pure strategies --- see Sec.~\ref{Opt}.}. 
	The criterion for setting $\bm{a}_{k \ell}(i)$ by agent $k$ is to optimize agent $k$'s payoff, which incorporates both the estimation cost, affected by agent $\ell$'s own action $\bm{a}_{ \ell k}(i)$, and the communication cost, determined by agent $k$'s action $\bm{a}_{k \ell}(i)$. Therefore, the instantaneous cost incurred by agent $k$ is a mapping function from the action space $(\bm{a}_{k\ell}(i), \bm{a}_{\ell k}(i))$ to a real value. In order to account for selfish behavior, we need to modify the notation used in (\ref{Jcoop}) to incorporate the actions of both agents $k$ and $\ell$. In this way, we need to denote the value of the cost incurred by agent $k$ at time $i$, after $\bm{w}_{k,i-1}$ is updated to $\bm{w}_{k,i}$, more explicitly by $J_{k}(\bm{a}_{k \ell}(i),\bm{a}_{\ell k}(i))$ and it is given by:
	\begin{align}
		\label{comuti}
		J_{k}&(\bm{a}_{k \ell}(i),\bm{a}_{\ell k}(i)) \notag \\
		&= 
		\begin{cases}
			J_k^\text{est}(\bm{w}_{k,i}=\bm{\psi}_{k,i}), & \text{if}~ (0,0)\\
			J_k^\text{est}(\bm{w}_{k,i}=\alpha_k\bm{\psi}_{k,i}+(1-\alpha_k)\bm{\psi}_{\ell,i}), & \text{if}~ (0,1)\\
			J_k^\text{est}(\bm{w}_{k,i}=\bm{\psi}_{k,i})+ c_k, & \text{if}~ (1,0)\\
			J_k^\text{est}(\bm{w}_{k,i}=\alpha_k\bm{\psi}_{k,i}+(1-\alpha_k)\bm{\psi}_{\ell,i})+ c_k, & \text{if}~ (1,1)
		\end{cases} 
	\end{align}
	For example, the first line on the right-hand side of (\ref{comuti}) corresponds to the situation in which none of the agents decides to cooperate. In that case, agent $k$ can only rely on its intermediate estimate, $\bm{\psi}_{k,i}$, to improve its estimation accuracy. In comparison, the second line in (\ref{comuti}) corresponds to the situation in which agent $\ell$ is willing to share its estimate but not agent $k$. In this case, agent $k$ is able to perform the second combination step in (\ref{noact}) and enhance its estimation accuracy. In the third line in (\ref{comuti}), agent $\ell$ does not cooperate while agent $k$ does. In this case, agent $k$ incurs a communication cost, $c_k$. Similarly, for the last line in (\ref{comuti}), both agents cooperate. In this case, agent $k$ is able to perform the second step in (\ref{noact}) while incurring a cost $c_k$.
	
	We can write (\ref{comuti}) more compactly as follows:
	\begin{align}
		\label{compactJ}
		J_{k}(\bm{a}_{k \ell}(i),\bm{a}_{\ell k}(i))=J_{k}^\text{act}(\bm{a}_{ \ell k}(i))+ \bm{a}_{k \ell}(i) c_k 
	\end{align}
	where we introduced
	\begin{align}
		\label{actJ}
		J_{k}^\text{act}&(\bm{a}_{ \ell k}(i)) \notag \\
		&\triangleq \begin{cases}
			J_k^\text{est}(\bm{w}_{k,i}=\bm{\psi}_{k,i}), & \text{if~} \bm{a}_{ \ell k}(i)=0 \\
			J_k^\text{est}(\bm{w}_{k,i}=\alpha_k\bm{\psi}_{k,i}+(1-\alpha_k)\bm{\psi}_{\ell,i}), & \text{if~} \bm{a}_{ \ell k}(i)=1
		\end{cases} 
	\end{align}
	The function $J_{k}^\text{act}(\bm{a}_{\ell k}(i))$ helps make explicit the influence of the action by agent $\ell$ on the estimation accuracy that is ultimately attained by agent $k$.
	
	\begin{table*}[t!] 
		\caption{The expected long-term cost functions $J_{k,i}^1$ and $J_{\ell,i}^1$.} 
		\footnotesize
		\centering 
		
		\begin{tabu}{|[1pt]c||c|c|[1pt]}
			\thickhline
			\backslashbox[0pt]{Agent $\ell$}{Agent $k$}
			& \cellcolor[HTML]{ECF4FF}$a_{k \ell}(i)=0$ & \cellcolor[HTML]{ECF4FF}$a_{k \ell}(i)=1$ \\  
			\hline \hline 
			\cellcolor[HTML]{ECF4FF}$a_{\ell k}(i)=0$ & \multicolumn{1}{c|}{
				$\begin{aligned}
				&\mathbb{E}[J_{\ell}^\text{act}(a_{ k \ell}(i)=0)|\bm{w}_{\ell,i-1}] \\ 
				&\mathbb{E}[J_{k}^\text{act}(a_{ \ell k}(i)=0)|\bm{w}_{k,i-1}]
				\end{aligned}$}
			& $\begin{aligned}
			&\mathbb{E}[J_{\ell}^\text{act}(a_{ k \ell}(i)=1)|\bm{w}_{\ell,i-1}] \\ 
			&\mathbb{E}[J_{k}^\text{act}(a_{ \ell k}(i)=0)|\bm{w}_{k,i-1}]+c_k
			\end{aligned}$ \\ 
			\hline
			\cellcolor[HTML]{ECF4FF}$a_{\ell k} (i)=1$ & 
			$\begin{aligned}
			&\mathbb{E}[J_{\ell}^\text{act}(a_{ k \ell}(i)=0)|\bm{w}_{\ell,i-1}]+c_\ell \\ 
			&\mathbb{E}[J_{k}^\text{act}(a_{ \ell k}(i)=1)|\bm{w}_{k,i-1}]
			\end{aligned}$
			& 
			$\begin{aligned}
			&\mathbb{E}[J_{\ell}^\text{act}(a_{ k \ell}(i)=1)|\bm{w}_{\ell,i-1}]+c_\ell \\ 
			&\mathbb{E}[J_{k}^\text{act}(a_{ \ell k}(i)=1)|\bm{w}_{k,i-1}]+c_k
			\end{aligned}$ \\ \thickhline 
		\end{tabu}
		\label{table1}
	\end{table*}
	
	Now, the random-pairing process occurs repeatedly over time and, moreover, agents may leave the network. For this reason, 
	rather than rely on the instantaneous cost function in (\ref{comuti}), agent $k$ will determine its  action at time $i$ by instead minimizing an expected long-term discounted cost function of the form defined by (\ref{a_sel}) where $\delta_k \in (0,1)$ is a discount factor to model future network uncertainties and the foresightedness level of agent $k$. The expectation is taken over all randomness for $t \geq i$ and is conditioned on the estimate $\bm{w}_{k,i-1}$ when the actions $a_{k \ell}(i)$ and $a_{\ell k}(i)$ are selected. Formulation (\ref{a_sel}) is meant to assess the influence of the action selected at time $i$ by agent $k$ on its cumulative (but discounted) future costs. More specifically, whenever $\mathbf{1}_{k \ell}(i)=1$, agent $k$ selects its action $a_{k \ell}(i)$ at time $i$ to minimize the expected long-term discounted cost given $\bm{w}_{k,i-1}$:
	\setcounter{equation}{25}
	\begin{align}
		\label{min}
		\min_{a_{k \ell}(i)\in \{0,1\}} J_{k,i}^\infty [a_{k \ell}(i),a_{\ell k}(i)|\bm{w}_{k,i-1}]
	\end{align}
	Based on the payoff function in (\ref{a_sel}), we can formally regard the interaction between agents as consisting of stage games with recurrent random pairing. The stage information-sharing game for $\mathbf{1}_{k\ell}(i)=1$ is a tuple $(\mathds{N} ,\mathds{A},\mathds{J})$, where $\mathds{N}\triangleq \{k,\ell\}$ is the set of players, and $\mathds{A} \triangleq \mathds{A}_k \times \mathds{A}_\ell$ is the Cartesian product of binary sets $\mathds{A}_k=\mathds{A}_\ell \triangleq \{1,0\}$ representing available actions for agents $k$ and $\ell$, respectively. The action profile is $a(i) \triangleq (a_{k \ell}(i),a_{\ell k}(i)) \in \mathds{A}$. Moreover, $\mathds{J} = \{J_{k,i}^\infty,J_{\ell,i}^\infty\}$ is the set of real-valued long-term costs defined over $\mathds{A} \rightarrow \mathbb{R}$ for agents $k$ and $\ell$, respectively. 
	We remark that since $J_{k,i}^\infty$ depends on $\bm{w}_{k,i-1}$, its value generally varies from stage to stage. As a result, each agent $k$ faces a dynamic game structure with repeated interactions in contrast to conventional repeated games as in~\cite{Shoham08,Xu14} where the game structure is fixed over time. Time variation is an essential feature that arises when we examine selfish behavior over adaptive networks.
	
	Therefore, solving problem (\ref{min}) involves the forecast of future game structures and future actions chosen by the opponent. These two factors are actually coupled and influence each other; this fact makes prediction under such conditions rather challenging. To continue with the analysis, we adopt a common assumption from the literature that agents have computational constraints. In particular, we assume the agents have bounded rationality~\cite{Gigerenzer02,Simon55,Park09}. In our context, this means that the agents have limited capability to forecast future game structures and are therefore obliged to assume that future parameters remain unchanged at current values. We will show how this assumption enables each agent $k$ to evaluate $J_{k,i}^\infty$ in later discussions.
	\begin{asump} [Bounded rationality] 
		\label{ass:rationalAgents}
		Every agent $k$ solves the optimization problem (\ref{min}) under the assumptions:
		\begin{align}
			\label{assump1}
			\bm{w}_{k,t}=\bm{w}_{k,i-1},~~\mathbf{1}_{k \ell}(t)=\mathbf{1}_{k \ell}(i),~~\text{for}~t\geq i
		\end{align}
		\hfill $\blacksquare$
		\vspace{2mm}
	\end{asump}
	We note that the above assumption is only made 
	by the agent at time $i$ while solving problem (\ref{min}); the actual estimates $\bm{w}_{k,t}$ and pairing choices $\mathbf{1}_{k\ell}(t)$ will continue to evolve over time. We further assume that the bounded rationality assumption is common knowledge to all agents in the network\footnote{Common knowledge of $p$ means that each agent knows $p$, each agent knows that all other agents know $p$, each agent knows that all other agents know that all the agents know $p$, and so on~\cite{Osborne94}.}.
	
	\subsection{Pareto Inefficiency}
	\label{dominant}
	In this section, we show that if no further measures are taken, then Pareto inefficiency may occur. Thus, assume that the agents are unable to store the history of their actions and the actions of their neighbors. Each agent $k$ only has access to its immediate estimate $\bm{w}_{k,i-1}$, which can be interpreted as a state variable at time $i-1$ for agent $k$. 
	In this case, each agent $k$ will need to solve (\ref{min}) under Assumption \ref{ass:rationalAgents}.
	It then follows that agent $k$ will predict the same action for future time instants:
	\begin{align}
		\label{prison1}
		\bm{a}_{k \ell}(t) = a_{k \ell}(i),~~\text{for}~t> i
	\end{align}
	Furthermore, since the bounded rationality condition is common knowledge, agent $k$ knows that the same future actions are used by agent $\ell$, i.e.,
	\begin{align}
		\label{prison2}
		\bm{a}_{\ell k}(t) = a_{\ell k}(i),~~\text{for}~t> i
	\end{align}
	Using (\ref{prison1}) and (\ref{prison2}), agent $k$ obtains
	\begin{align}
		\label{oneJ}
		J_{k,i}^\infty &\left[a_{k \ell}(i),a_{\ell k}(i)|\bm{w}_{k,i-1}\right] \notag \\
		&=\sum_{t=i}^{\infty} \delta_k^{t-i} \mathbb{E} 
		\Big[
		J_{k}(a_{k \ell}(i),a_{\ell k}(i)) \Big{|}\bm{w}_{k,i-1}\Big] \notag \\
		&=\frac{1}{1-\delta_k} \cdot \mathbb{E} \Big[J_{k}(a_{k \ell}(i),a_{ \ell k}(i))\Big{|}\bm{w}_{k,i-1} \Big] \notag \\
		&=\frac{1}{1-\delta_k} \left(\mathbb{E} [
		J_{k}^\text{act}(a_{ \ell k}(i))|\bm{w}_{k,i-1}]+a_{k \ell}(i) c_k \right) 
	\end{align}
	Therefore, the optimization problem (\ref{min}) reduces to the following minimization problem:
	\begin{align}
		\label{min2}
		\min_{a_{k \ell}(i)\in \{0,1\}} J_{k,i}^1 \left(a_{k \ell}(i),a_{\ell k}(i)\right)
	\end{align}
	where 
	\begin{align}
		\label{oneshot}
		\!\!\!J_{k,i}^1 \left(a_{k \ell}(i),a_{\ell k}(i)\right) \triangleq \mathbb{E} [
		J_{k}^\text{act}(a_{ \ell k}(i))|\bm{w}_{k,i-1}]+a_{k \ell}(i) c_k
	\end{align}
	is the expected cost of agent $k$ given $\bm{w}_{k,i-1}$ --- compare with (\ref{compactJ}).
	Table~\ref{table1} summarizes the values of $J_{k,i}^1$ and $J_{\ell,i}^1$ for both agents under their respective actions. From the entries in the table, we conclude that choosing action $a_{k \ell}(i)=0$ is the dominant strategy for agent $k$ regardless of the action chosen by agent $\ell$ because its cost will be the smallest it can be in that situation. Likewise, the dominant strategy for agent $\ell$ is $a_{\ell k}(i)=0$ regardless of the action chosen by agent $k$. Therefore, the action profile $(a_{k \ell}(i),a_{\ell k}(i))=(0,0)$ is the unique outcome as a Nash and dominant strategy equilibrium for every stage game. 
	
	\begin{figure*}[!t]
		\centerline{\subfloat[$\bm{\gamma}_{k,i}>1$ and $\bm{\gamma}_{\ell,i}>1$]{\includegraphics[width=2.8in]{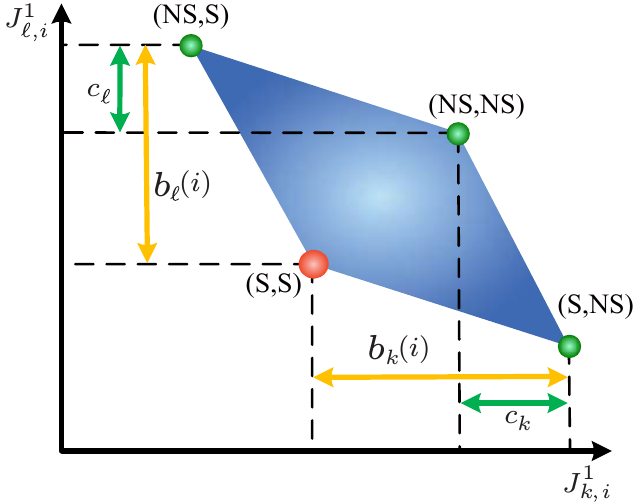}
				\label{fig:a}}
			\hfil
			\subfloat[$\bm{\gamma}_{k,i}<1$ and $\bm{\gamma}_{\ell,i}<1$]{\includegraphics[width=2.8in]{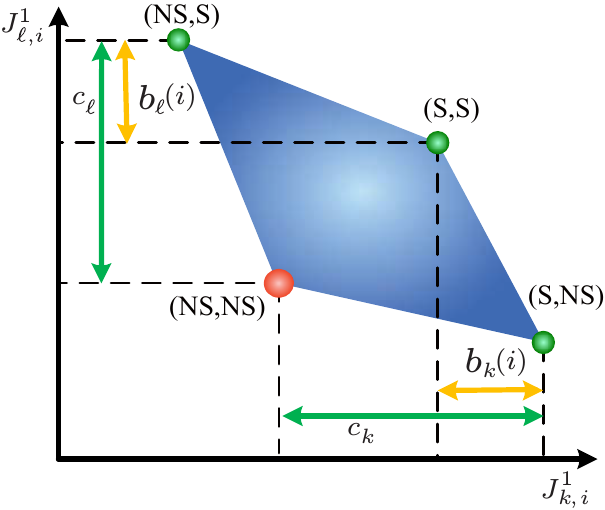}
				\label{fig:b}}}
		\caption{Illustration of the behavior of the payoffs in terms of the size of the benefit-cost ratios (``S" and ``NS" refer to the actions ``share" and ``do not share", respectively).}
		\label{fig:CC}
	\end{figure*}
	
	However, this resulting action profile will be Pareto inefficient for both agents if it can be verified that
	the alternative action profile $(1,1)$, where both agents cooperate, can lead to improved payoff values for both agents in comparison to the strategy $(0,0)$. 
	To characterize when this is possible, let us denote the expected payoff for agent $k$ when agent $\ell$ selects $a_{\ell k}(i)=0$ by 
	\begin{align}
		\label{sk}
		\bm{s}_{k,i}^0(a_{k \ell}(i)) &\triangleq  \mathbb{E}[J_{k}^\text{act}(a_{\ell k}(i)=0)|\bm{w}_{k,i-1}]+a_{k \ell}(i) c_k 
	\end{align}
	Likewise, when $a_{\ell k}(i)=1$, we denote the expected payoff for
	agent $k$ by
	\begin{align}
		\label{sl}
		\bm{s}_{k,i}^1(a_{k \ell}(i)) &\triangleq \mathbb{E}[J_{k}^\text{act}(a_{\ell k}(i)=1)|\bm{w}_{k,i-1}]+a_{k \ell}(i) c_k 
	\end{align}
	The benefit for agent $k$ from agent $\ell$'s sharing action, defined as the improvement from $\bm{s}_{k,i}^0(a_{k \ell}(i))$ to $\bm{s}_{k,i}^1(a_{k \ell}(i))$, is seen to be independent of $a_{k \ell}(i)$:
	\begin{align}
		\label{benefit}
		&\bm{b}_k(i) \notag \\
		&~\triangleq \bm{s}_{k,i}^0(a_{k \ell}(i))-\bm{s}_{k,i}^1(a_{k \ell}(i)) \notag \\
		&~=\mathbb{E}[J_{k}^\text{act}(a_{\ell k}(i)=0)|\bm{w}_{k,i-1}]- \mathbb{E}[J_{k}^\text{act}(a_{\ell k}(i)=1)|\bm{w}_{k,i-1}] \notag \\
		&~=\mathbb{E}\left[J_k^\text{est}\left(\bm{w}_{k,i}=\bm{\psi}_{k,i}\right)|\bm{w}_{k,i-1}\right] \notag \\
		&~~~~-\mathbb{E}\left[J_k^\text{est}\left(\bm{w}_{k,i}=\alpha_k\bm{\psi}_{k,i}+(1-\alpha_k)\bm{\psi}_{\ell,i}\right)|\bm{w}_{k,i-1}\right]
	\end{align}
	Now, note from definition (\ref{apriori}) that
	\begin{align}
		\label{condMSE}
		&\mathbb{E} \left[J_k^\text{est}(\bm{w}_{k,i}) | \bm{w}_{k,i-1} \right] \notag \\
		&= \mathbb{E} \left[| \bm{d}_k(i+1) - \bm{u}_{k,i+1} \bm{w}_{k,i} |^2 \Big| \bm{w}_{k,i-1} \right]
	\end{align}
	so that
	\begin{align}
		\label{BPred}
		\mathbb{E}[&J_{k}^\text{act}(a_{\ell k}(i)=0)|\bm{w}_{k,i-1}] \notag \\ &~~=\mathbb{E}\left[J_k^\text{est}\left(\bm{w}_{k,i}=\bm{\psi}_{k,i}\right)|\bm{w}_{k,i-1}\right] \notag \\
		&~~=\mathbb{E}\left[|\bm{d}_k(i+1)-\bm{u}_{k,i+1}\bm{\psi}_{k,i}|^2
		\Big|\bm{w}_{k,i-1}\right] \notag \\
		&~~=\mathbb{E}\left[|\bm{u}_{k,i+1} \widetilde{\bm{\psi}}_{k,i}+\bm{v}_k(i+1)|^2
		\Big|\bm{w}_{k,i-1}\right] \notag \\
		&~~=\mathbb{E}\left[\|\widetilde{\bm{\psi}}_{k,i}\|^2_{R_{u,k}}
		\Big|\bm{w}_{k,i-1}\right] +\sigma^2_{v,k}
	\end{align}
	where $\widetilde{\bm{\psi}}_{k,i} \triangleq w^o-\bm{\psi}_{k,i}$ and, similarly,
	\begin{align}
		\label{BPred1}
		&\mathbb{E}[J_{k}^\text{act}(a_{\ell k}(i)=1)|\bm{w}_{k,i-1}] \notag \\
		&~~=\mathbb{E}\left[J_k^\text{est}\left(\bm{w}_{k,i}=\alpha_k\bm{\psi}_{k,i}+(1-\alpha_k)\bm{\psi}_{\ell,i}\right)|\bm{w}_{k,i-1}\right] \notag \\
		&~~=\mathbb{E}\left[\|\alpha_k \widetilde{\bm{\psi}}_{k,i}+(1-\alpha_k)\widetilde{\bm{\psi}}_{\ell,i}\|^2_{R_{u,k}}
		\Big|\bm{w}_{k,i-1}\right] +\sigma^2_{v,k}
	\end{align}
	Then, the benefit $\bm{b}_k(i)$ becomes
	\begin{align}
		\label{benefit2}
		\bm{b}_k(i) &= \mathbb{E}\left[\|\widetilde{\bm{\psi}}_{k,i}\|^2_{R_{u,k}}\Big|\bm{w}_{k,i-1}\right] \notag \\
		&~~~-\mathbb{E}\left[\|\alpha_k \widetilde{\bm{\psi}}_{k,i}+(1-\alpha_k)\widetilde{\bm{\psi}}_{\ell,i}\|^2_{R_{u,k}}
		\Big|\bm{w}_{k,i-1}\right]
	\end{align}
	Note that $\bm{b}_k(i)$ is determined by the variable $\bm{w}_{k,i-1}$ and does not depend on the actions $a_{\ell k}(i)$ and $a_{k \ell}(i)$. We will explain how agents assess the information $\bm{b}_k(i)$ to choose actions further ahead in Sec. IV-C.
	Now, let us define the benefit-cost ratio as the ratio of the estimation benefit to the communication cost:
	\begin{align}
		\bm{\gamma}_k(i) \triangleq \frac{\bm{b}_k(i)}{c_k}
	\end{align}
	Then, the action profile $(1,1)$ in the game defined in Table~\ref{table1} is Pareto superior to the action profile $(0,0)$ when both of the following two conditions hold
	\begin{align}
		\label{gamma}
		\bm{\gamma}_k(i) >1~~\text{and}~~\bm{\gamma}_\ell(i) >1 ~~\Leftrightarrow~~
		\begin{cases}
			c_k < \bm{b}_k(i)\\
			c_\ell < \bm{b}_\ell(i)
		\end{cases} 
	\end{align}
	On the other hand, the action profile $(0,0)$ is Pareto superior to the action profile $(1,1)$ if, and only if,
	\begin{align}
		\label{gamma2}
		\bm{\gamma}_k(i) <1~~\text{and}~~\bm{\gamma}_\ell(i) <1
	\end{align}
	In Fig.~\ref{fig:CC}(a), we illustrate how the values of the payoffs compare to each other when (\ref{gamma}) holds for the four possibilities of action profiles. It is seen from this figure that when $\bm{\gamma}_k(i)>1$ and $\bm{\gamma}_\ell(i)>1$, the action profile (S,S), i.e., $(1,1)$ in (\ref{oneshot}), is Pareto optimal and that the dominant strategy (NS,NS), i.e., $(0,0)$ in (\ref{oneshot}), is inefficient and leads to worse performance (which is a manifestation of the famous prisoner's dilemma problem~\cite{Rapaport65}). On the other hand, if $\bm{\gamma}_k(i)<1$ and $\bm{\gamma}_\ell(i)<1$, then we are led to Fig.~\ref{fig:CC}(b), where the action profile (NS,NS) becomes Pareto optimal and superior to (S,S). We remark that (NS,S) and (S,NS) are also Pareto optimal in both cases but not preferred in this work because they are only beneficial for one single agent. 
	
	\section{Adaptive Reputation Protocol Design}
	\label{AdRe}
	As shown above, when both $\bm{\gamma}_k(i)>1$ and $\bm{\gamma}_\ell(i)>1$, the Pareto optimal strategies for agents $k$ and $\ell$ correspond to cooperation; when both $\bm{\gamma}_k(i)<1$ and $\bm{\gamma}_\ell(i)<1$, the Pareto optimal strategies for agents $k$ and $\ell$ reduce to non-cooperation. Since agents are self-interested and boundedly rational, we showed earlier that if left without incentives, their dominant strategy is to avoid sharing information because they cannot tell beforehand if their paired neighbor will reciprocate. This Pareto inefficiency therefore arises from the fact that agents are not using historical data to predict other agents' actions. We now propose a reputation protocol to summarize the opponent's past actions into a reputation score. The score will help agents to form a belief of their opponent's subsequent actions. Based on this belief, we will be able to provide agents with a measure that entices them to cooperate. We will show, for example, that the best response rule for agents will be to cooperate whenever $\bm{\gamma}_k(i)$ is large and not to cooperate whenever $\bm{\gamma}_k(i)$ is small, in conformity with the Pareto-efficient design.

	\setcounter{equation}{48}
	\begin{figure*}[!b]
		\hrulefill
		\normalsize
		\begin{align}
			\label{Pk}
			J_{k,i}^{\infty' }[a_{k\ell}(i)|\bm{w}_{k,i-1}]
			&=\!\sum_{t=i}^{\infty} \delta_k^{t-i} \mathbb{E} \!
			\left[
			J_{k}^\text{act}(\bm{a}_{\ell k}(t))+ \bm{a}_{k \ell}(t) c_k \Big{|} 
			\bm{w}_{k,i-1}, \bm{a}_{k \ell}(i)=a_{k \ell}(i), \mathbb{K}_i \right] \notag \\
			&=\!\mathbb{E} 
			\!\left[
			J_{k}^\text{act}(\bm{a}_{\ell k}(i))\Big{|} 
			\bm{w}_{k,i-1}, \mathbb{K}_i \right]\!+\!a_{k \ell}(i) c_k +\!\!
			\sum_{t=i+1}^{\infty} \delta_k^{t-i} \mathbb{E} \!
			\left[
			J_{k}^\text{act}(\bm{a}_{\ell k}(t))+ \bm{a}_{k \ell}(t) c_k \Big{|} 
			\bm{w}_{k,i-1}, \bm{a}_{k \ell}(i)=a_{k \ell}(i), \mathbb{K}_i \right]
		\end{align}
	\end{figure*}
	\setcounter{equation}{42}
	
	\subsection{Reputation Protocol}
	
	Reputation scores have been used before in the literature as a mechanism to encourage cooperation~\cite{Zhang12,Jurca04,Fan05}. Agents that cooperate are rewarded with higher scores; agents that do not cooperate are penalized with lower scores.
	For example, eBay uses a cumulative score mechanism, which simply sums the seller’s feedback scores from all previous periods to provide buyers and sellers with trust evaluation~\cite{Houser06}. Likewise, Amazon.com implements a reputation system by using an average score mechanism that averages the feedback scores from the previous periods~\cite{Audun07}. However, as already explained in~\cite{Fan05}, cheating can occur over time in both cumulative and average score mechanisms because past scores carry a large weight in determining the current reputation. To overcome this problem, and in a manner similar to exponential weighting in adaptive filter designs~\cite{Sayed08}, an exponentially-weighted moving average mechanism that gives higher weights to more recent actions is discussed in~\cite{Fan05}. We follow a similar weighting formulation, with the main difference being that the reputation scores now need to be adapted in response to the evolution  of the estimation task over the network. The construction can be described as follows. 
	
	When $\mathbf{1}_{k \ell}(i)=1$, meaning that agent $k$ is paired with agent $\ell$, the reputation score $\bm{\theta}_{\ell k}(i) \in [0,1]$ that is maintained by agent $k$ for its neighbor $\ell$ is updated as:
	\begin{align}
		\label{repuo}
		\bm{\theta}_{\ell k}(i+1) = r_k \bm{\theta}_{\ell k}(i) +(1-r_k)\bm{a}_{\ell k}(i)
	\end{align}
	where $r_k\in (0,1)$ is a smoothing factor for agent $k$ to control the dynamics of the reputation updates. On the other hand, if $\mathbf{1}_{k \ell}(i)=0$, the reputation score $\bm{\theta}_{\ell k}(i+1)$ remains as $\bm{\theta}_{\ell k}(i)$. We can compactly describe the reputation rule as
	\begin{align}
		\label{repu}
		\bm{\theta}_{\ell k}(i+1) &= \mathbf{1}_{k \ell}(i) \left[ r_k \bm{\theta}_{\ell k}(i) +(1-r_k)\bm{a}_{\ell k}(i)\right] \notag \\
		&~~~+(1-\mathbf{1}_{k \ell}(i)) \bm{\theta}_{\ell k}(i)
	\end{align}
	Directly applying the above reputation formulation, however, 
	can cause a loss in adaptation ability over the network. For example, the network would become permanently non-cooperative when agent $\ell$ chooses $\bm{a}_{\ell k}(i)=0$ for long consecutive iterations. That is because, in that case, the reputation score $\bm{\theta}_{\ell k}(i)$ will decay exponentially to zero, which keeps agent $k$ from choosing $\bm{a}_{k \ell}(i)=1$ in the future. 
	In order to avoid this situation, we set a lowest value for the reputation score to a small positive threshold $0<\varepsilon \ll 1$, i.e.,
	\begin{align}
		\label{repu2}
		\bm{\theta}_{\ell k}(i+1) &= \mathbf{1}_{k \ell}(i) \cdot \max\{ r_k \bm{\theta}_{\ell k}(i) +(1-r_k)\bm{a}_{\ell k}(i),\varepsilon \}\notag \\
		&~~~+ (1-\mathbf{1}_{k \ell}(i)) \bm{\theta}_{\ell k}(i)
	\end{align}
	and thus $\bm{\theta}_{\ell k}(i)\in [\varepsilon,1]$.
	
	The reputation scores can now be utilized to evaluate the belief 
	by agent $k$ of subsequent actions by agent $\ell$. To explain how this can be done, we argue that agent $k$ would expect the probability of $\bm{a}_{\ell k}(t)=1$, i.e., the probability that agent $\ell$ is willing to cooperate, to be an increasing function of both $\bm{\theta}_{\ell k}(t)$ and $\bm{\theta}_{k \ell}(t)$ for $t \geq i$. Specifically, if we denote this belief probability by $B(\bm{a}_{\ell k}(t)=1)$, then it is expected to satisfy:
	\begin{align}
		\label{Bformulate}
		\frac{\partial B(\bm{a}_{\ell k}(t)=1 )}{\partial \bm{\theta}_{\ell k}(t)} \geq 0,~~ \frac{\partial B(\bm{a}_{\ell k}(t)=1 )}{\partial \bm{\theta}_{k \ell}(t)} \geq 0
	\end{align}
	The first property is motivated by the fact that according to the history of actions, a higher value for $\bm{\theta}_{\ell k}(t)$ indicates that agent $\ell$ has higher willingness to share estimates. The second property is motivated by the fact that lower values for $\bm{\theta}_{k \ell}(t)$ mean that agent $k$ has rarely shared estimates with agent $\ell$ in the recent past. Therefore, it can be expected that agent $\ell$ will have lower willingness to share information for lower values of $\bm{\theta}_{k\ell}(t)$. Based on this argument, we suggest a first-order construction for measuring belief with respect to both $\bm{\theta}_{\ell k}(t)$ and $\bm{\theta}_{k \ell}(t)$ as follows (other constructions are of course possible; our intent is to keep the complexity of the solution low while meeting the desired objectives):
	\begin{align}
		\label{probb}
		B(\bm{a}_{\ell k}(t)=1 ) =  \bm{\theta}_{k \ell}(t) \cdot \bm{\theta}_{\ell k}(t), ~~ t \geq i
	\end{align}
	which satisfies both properties in (\ref{Bformulate}) and where $B(\bm{a}_{\ell k}(t)=1 ) \in [\varepsilon^2,1]$. 
	Therefore, the reputation protocol implements (\ref{repu2}) and (\ref{probb}) repeatedly. Each agent $k$ will then employ the reference knowledge $\mathbb{K}_i\triangleq \{\bm{\theta}_{k \ell}(i),\bm{\theta}_{\ell k}(i), B(\bm{a}_{\ell k}(i)=1 )\}$ to select its action $a_{k \ell}(i)$ as described next. 
	
	\subsection{Best Response Rule}
	\label{Opt}
	The belief measure (\ref{probb}) provides agent $k$ with additional information about agent  $\ell$'s actions. That is, with (\ref{probb}), agent $k$ can treat $\bm{a}_{\ell k}(t)$ as a random variable with distribution $B(\bm{a}_{\ell k}(t)=1)$ for $t\geq i$. 
	Then, the best response of agent $k$ is obtained by solving the following optimization problem:
	\begin{align}
		\label{min3}
		\min_{a_{k \ell}(i)\in \{0,1\}} J_{k,i}^{\infty' }[a_{k\ell}(i)|\bm{w}_{k,i-1}] 
	\end{align}
	where $J_{k,i}^{\infty' }[a_{k\ell}(i)|\bm{w}_{k,i-1}]$ is defined by (\ref{Pk}) and involves an additional expectation over the distribution of $\bm{a}_{\ell k}(t)$ --- compare with (\ref{a_sel}). Similarly to Assumption \ref{ass:rationalAgents}, we assume the bounded rationality of the agents extends to the reputation scores $\bm{\theta}_{\ell k}(t)$ for $t \geq i$.
	\vspace{2mm}
	\begin{asump} [Extended bounded rationality] 
		\label{ass:rationalAgentsEx}
		We extend the assumption of bounded rationality from (\ref{assump1}) to also include:
		\begin{align}
			\setcounter{equation}{49}
			\label{assump2}
			\bm{\theta}_{\ell k}(t)=\bm{\theta}_{\ell k}(i),~~\text{for}~t\geq i
		\end{align}~
		\hfill $\blacksquare$
		\vspace{2mm}
	\end{asump}
	Now, using pure strategies, the best response of agent $k$ is to select the action $a_{k \ell}(i)$ such that
	\begin{align}
		\label{best}
		a_{k \ell}(i)=\begin{cases}
			1, & \text{if}~ J_{k,i}^{\infty' }[a_{k\ell}(i)=1|\bm{w}_{k,i-1}]  \\
			&~~~~~~~~~<J_{k,i}^{\infty' }[a_{k\ell}(i)=0|\bm{w}_{k,i-1}]\\
			0, & \text{otherwise} 
		\end{cases}
	\end{align}
	The following lemma shows how the best response rule depends on the benefit-cost ratio $\bm{\gamma}_k(i)$ and the communication cost $c_k$:
	\vspace{2mm}
	\begin{lemma}
		With Assumptions 1 and 2, the best response rule $f_k(\cdot)$ becomes
		\begin{align}
			\label{stra}
			a_{k \ell}(i)=
			\begin{cases}
				1, & \text{if}~ \bm{\gamma}_k(i)\triangleq \frac{\bm{b}_k(i)}{c_k}> \frac{\chi_k}{\bm{\theta}_{\ell k}(i)}\\
				0, & \text{otherwise} 
			\end{cases}
		\end{align}
		where 
		\begin{align}
			\label{chi}
			\chi_k& \triangleq \frac{1- \delta_k r_k }{\delta_k(1-r_k)}
		\end{align} 
	\end{lemma}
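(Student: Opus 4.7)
The plan is to evaluate $J_{k,i}^{\infty'}[a_{k\ell}(i)|\bm{w}_{k,i-1}]$ in closed form for each of the two candidate values $a_{k\ell}(i)=a\in\{0,1\}$ under Assumptions~\ref{ass:rationalAgents} and~\ref{ass:rationalAgentsEx}, and then apply the best-response criterion (\ref{best}). First I would invoke bounded rationality inside the conditional expectations in (\ref{Pk}): since $\bm{w}_{k,t}=\bm{w}_{k,i-1}$, $\mathbf{1}_{k\ell}(t)=1$, and $\bm{\theta}_{\ell k}(t)=\bm{\theta}_{\ell k}(i)$ for all $t\geq i$, the two deterministic quantities $\mathbb{E}[J_k^{\text{act}}(\bm{a}_{\ell k}(t)=0)|\bm{w}_{k,i-1}]$ and $\mathbb{E}[J_k^{\text{act}}(\bm{a}_{\ell k}(t)=1)|\bm{w}_{k,i-1}]$ are independent of $t$, so that their difference collapses to the time-invariant benefit $\bm{b}_k(i)$ of (\ref{benefit}); moreover, by common knowledge of bounded rationality, agent $k$ predicts $\bm{a}_{k\ell}(t)=a$ for all $t\geq i$.

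The only genuinely non-stationary object is the reputation $\bm{\theta}_{k\ell}(t)$, which still evolves according to (\ref{repuo}) driven by the fixed future action $a$. Solving the linear recursion, while assuming the $\varepsilon$-floor in (\ref{repu2}) is inactive, yields
\begin{align*}
\bm{\theta}_{k\ell}(t)=\begin{cases} 1-r_k^{t-i}\bigl(1-\bm{\theta}_{k\ell}(i)\bigr),& a=1,\\ r_k^{t-i}\,\bm{\theta}_{k\ell}(i),& a=0.\end{cases}
\end{align*}
Averaging over $\bm{a}_{\ell k}(t)$ with belief weight $\bm{\theta}_{k\ell}(t)\,\bm{\theta}_{\ell k}(i)$ from (\ref{probb}), the conditional stage cost at time $t$ becomes the affine expression $\mu_k-\bm{\theta}_{k\ell}(t)\,\bm{\theta}_{\ell k}(i)\,\bm{b}_k(i)+a\,c_k$, where $\mu_k\triangleq\mathbb{E}[J_k^{\text{act}}(\bm{a}_{\ell k}(t)=0)|\bm{w}_{k,i-1}]$ collects the $a$-independent baseline.

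Plugging back into (\ref{Pk}) reduces the calculation to two geometric sums: $\sum_{t=i}^{\infty}\delta_k^{t-i}\bm{\theta}_{k\ell}(t)$ evaluates to $\tfrac{1}{1-\delta_k}-\tfrac{1-\bm{\theta}_{k\ell}(i)}{1-\delta_k r_k}$ for $a=1$ and to $\tfrac{\bm{\theta}_{k\ell}(i)}{1-\delta_k r_k}$ for $a=0$. The $\mu_k$ terms cancel in the difference, and after simplification
\begin{align*}
J_{k,i}^{\infty'}[1]-J_{k,i}^{\infty'}[0]=\frac{c_k}{1-\delta_k}-\bm{\theta}_{\ell k}(i)\,\bm{b}_k(i)\cdot\frac{\delta_k(1-r_k)}{(1-\delta_k)(1-\delta_k r_k)}.
\end{align*}
This is strictly negative precisely when $\bm{b}_k(i)\,\bm{\theta}_{\ell k}(i)/c_k>(1-\delta_k r_k)/[\delta_k(1-r_k)]$, i.e., $\bm{\gamma}_k(i)>\chi_k/\bm{\theta}_{\ell k}(i)$, which is exactly (\ref{stra}).

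The main conceptual obstacle is recognizing that $\bm{\theta}_{k\ell}(t)$ must \emph{not} be held constant (in contrast to $\bm{\theta}_{\ell k}(t)$ under Assumption~\ref{ass:rationalAgentsEx}): it is precisely the lever through which the current action $a_{k\ell}(i)$ affects future expected payoffs, and this dependence is what injects the $r_k$-factor into $\chi_k$ and, in turn, creates the incentive to cooperate. A secondary bookkeeping issue is the $\max\{\cdot,\varepsilon\}$ clipping in (\ref{repu2}); because $\varepsilon\ll 1$ is only a safeguard, I would argue it does not affect the threshold to leading order, or, alternatively, split the sum at the clipping time and show that the residual contributes only an $O(\varepsilon/(1-\delta_k))$ correction that can be absorbed into the strict inequality.
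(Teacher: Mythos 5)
Your proposal is correct and follows essentially the same route as the paper's Appendix A: decompose the cost difference into the communication term and the belief-weighted benefit term, hold $\bm{w}_{k,t}$, $\mathbf{1}_{k\ell}(t)$, and $\bm{\theta}_{\ell k}(t)$ fixed while letting $\bm{\theta}_{k\ell}(t)$ evolve under the repeated action $a$, solve the linear reputation recursion (your closed forms agree with the paper's $\bm{\theta}_{k\ell}^0(t)$ and $\bm{\theta}_{k\ell}^1(t)$), and sum the resulting geometric series to obtain the threshold $\chi_k/\bm{\theta}_{\ell k}(i)$. Your identification of the evolving $\bm{\theta}_{k\ell}(t)$ as the mechanism injecting the $r_k$-dependence, and your treatment of the $\varepsilon$-floor as negligible, both match the paper's argument.
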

	\begin{proof} 
		See Appendix A.
	\end{proof}
	\vspace{2mm}
	We note that the resulting rule aligns the agents to achieve the Pareto optimal strategy: to share information when $\bm{\gamma}_k(i)$ is sufficiently large and not to share information when $\bm{\gamma}_k(i)$ is small.	 
			
	\subsection{Benefit Prediction}
	To compute the benefit-cost ratio $\bm{\gamma}_k(i)=\bm{b}_k(i)/c_k$, the agent still needs to know $\bm{b}_k(i)$ defined by (\ref{benefit}), which depends on the quantities $\mathbb{E}[J_{k}^\text{act}(a_{\ell k}(i)=0)|\bm{w}_{k,i-1}]$ and $\mathbb{E}[J_{k}^\text{act}(a_{\ell k}(i)=1)|\bm{w}_{k,i-1}]$.
	It is common in the literature, as in~\cite{Namvar13,Nayyar13}, to assume that agents have complete information about the payoff functions like the ones shown in Table~\ref{table1}. However, in the context of adaptive networks where agents have only access to data realizations and not to their statistical distributions, the payoffs are unknown and need to be estimated or predicted. For example, in our case, the convex combination $\alpha_k\bm{\psi}_{k,i}+(1-\alpha_k)\bm{\psi}_{\ell,i}$ is unknown for agent $k$ before agent $\ell$ shares $\bm{\psi}_{\ell,i}$ with it. We now describe one way by which agent $k$ can predict $\bm{b}_k(i)$; other ways are possible depending on how much information is available to the agent. 
	Let us assume a special type of agents, which are called risk-taking~\cite{Wyatt90}: agent $k$ chooses $a_{k\ell}(i)=1$ as long as the largest achievable benefit, denoted by $\bar{\bm{b}}_k(i)$, exceeds the threshold: 
	\begin{align}
		\label{stra2}
		a_{k \ell}(i)=
		\begin{cases}
			1, & \text{if}~ \frac{\bar{\bm{b}}_k(i)}{c_k}> \frac{\chi_k}{\bm{\theta}_{\ell k}(i)} \\
			0, & \text{otherwise} 
		\end{cases}
	\end{align}
	Using (\ref{benefit2}), the largest achievable benefit $\bar{\bm{b}}_k(i)$ can be found by solving the following optimization problem:
	\begin{align}
		\label{bbar}
		\bar{\bm{b}}_k(i) &\triangleq \max_{\widetilde{\bm{\psi}}_{\ell,i}} \bm{b}_k(i) \notag \\
		&= \max_{\widetilde{\bm{\psi}}_{\ell,i}} 
		\Big\{\mathbb{E}\left[\|\widetilde{\bm{\psi}}_{k,i}\|^2_{R_{u,k}}\Big|\bm{w}_{k,i-1}\right] \notag \\
		&~~~~~~~~~~~-\mathbb{E}\left[\|\alpha_k \widetilde{\bm{\psi}}_{k,i}+(1-\alpha_k)\widetilde{\bm{\psi}}_{\ell,i}\|^2_{R_{u,k}}
		\Big|\bm{w}_{k,i-1}\right] \Big\} \notag \\
		&=\mathbb{E}\left[\|\widetilde{\bm{\psi}}_{k,i}\|^2_{R_{u,k}} \Big|\bm{w}_{k,i-1}\right]
	\end{align}
	since the maximum occurs when
	\begin{align}
		\widetilde{\bm{\psi}}_{\ell,i}=-\frac{\alpha_k}{1-\alpha_k}\widetilde{\bm{\psi}}_{k,i}
	\end{align}
	Let us express the adaptation step (\ref{ATC_A}) in terms of the estimation error as
	\begin{align}
		\label{adapt_error}
		\widetilde{\bm{\psi}}_{k,i} &=(I-\mu \bm{u}_{k,i}^* \bm{u}_{k,i}) \widetilde{\bm{w}}_{k,i-1}-\mu \bm{u}_{k,i}^* \bm{v}_k(i)
	\end{align}
	To continue, we assume that the step-size $\mu$ is sufficiently small. Then, 
	\begin{align}
		\mathbb{E}&\left[\|\widetilde{\bm{\psi}}_{k,i}\|^2_{R_{u,k}}
		\Big|\bm{w}_{k,i-1}\right] \notag \\
		&=\mathbb{E}\Big[\|(I-\mu \bm{u}_{k,i}^* \bm{u}_{k,i} ) \widetilde{\bm{w}}_{k,i-1}-\mu \bm{u}_{k,i}^* \bm{v}_k(i)\|^2_{R_{u,k}} \Big|\bm{w}_{k,i-1}\Big]  \notag \\
		&=\mathbb{E}\Big[\widetilde{\bm{w}}_{k,i-1}^*(I-\mu \bm{u}_{k,i}^* \bm{u}_{k,i}) R_{u,k} (I-\mu \bm{u}_{k,i}^* \bm{u}_{k,i})  \notag \\
		&~~~~~~~ \times \widetilde{\bm{w}}_{k,i-1} \Big|\bm{w}_{k,i-1} \Big] +
		\mu^2 \text{Tr}(R_{u,k}^2) \sigma^2_{v,k}
		\notag \\
		&= \widetilde{\bm{w}}_{k,i-1}^* \Omega_k \widetilde{\bm{w}}_{k,i-1}+O(\mu^2)
	\end{align}
	where we are collecting terms that are second-order in the step-size into the factor $O(\mu^2)$,\footnote{This approximation simplifies the algorithm construction. However, when we study the network performance later in (\ref{gpsi}) we shall keep the second-order terms.} and where we introduced $\Omega_k \triangleq R_{u,k} (I-2\mu R_{u,k})$. We note that for sufficiently small step-sizes: 
	\begin{align}
		\label{Omega}
		\Omega_k' \triangleq R_{u,k} (I-\mu R_{u,k})^2 \approx R_{u,k} (I-2\mu R_{u,k}) = \Omega_k
	\end{align}
	Therefore, each agent $k$ can approximate $\bar{\bm{b}}_k(i)$ as
	\begin{align}
		\label{bhat}
		\bar{\bm{b}}_k(i) &= \widetilde{\bm{w}}_{k,i-1}^* \Omega_k'  \widetilde{\bm{w}}_{k,i-1} \notag \\
		&=\widetilde{\bm{w}}_{k,i-1}^* R_{u,k} (I-\mu R_{u,k})^2  \widetilde{\bm{w}}_{k,i-1} \notag \\
		&=\|(I-\mu R_{u,k})  \widetilde{\bm{w}}_{k,i-1}\|^2_{R_{u,k}} 
	\end{align}
	
	\subsection{Real-Time Implementation}
	\label{realtime}
	Expression (\ref{bhat}) is still not useful for agents because it requires knowledge of both $R_{u,k}$ and $\widetilde{\bm{w}}_{k,i-1}$. With regards to $R_{u,k}$, we can use the instantaneous approximation $R_{u,k} \approx \bm{u}_{k,i}^* \bm{u}_{k,i}$ to get
	\begin{align}
		\label{bapprox}
		\bar{\bm{b}}_k(i) &\approx  \widetilde{\bm{w}}_{k,i-1}^* \bm{u}_{k,i}^* \bm{u}_{k,i} (I-\mu\bm{u}_{k,i}^* \bm{u}_{k,i})^2 \widetilde{\bm{w}}_{k,i-1} \notag \\
		&=(1-\mu\|\bm{u}_{k,i}\|^2)^2 \widetilde{\bm{w}}_{k,i-1}^* \bm{u}_{k,i}^* \bm{u}_{k,i} \widetilde{\bm{w}}_{k,i-1}
	\end{align}
	With regards to $\widetilde{\bm{w}}_{k,i-1}$, we assume that agents use a moving-average filter as in~\cite{Rort10} to approximate $w^o$ iteratively as follows:
	\begin{align}
		\widehat{\bm{w}}_{k,i}^o &= (1-\nu) \widehat{\bm{w}}_{k,i-1}^o+\nu \bm{\psi}_{k,i} \\
		\label{insapp}
		\widetilde{\bm{w}}_{k,i-1} &\approx \widehat{\bm{w}}_{k,i}^o-\bm{w}_{k,i-1}
	\end{align}
	where $\nu\in (0,1)$ is a positive forgetting factor close to 0 to give higher weights on recent results. We summarize the operation of the resulting algorithm in the following listing.
	
	\begin{algorithm}
		\footnotesize
		\caption{\small Diffusion Strategy with an Adaptive Reputation Scheme.}
		\label{alg:summary}
		\begin{algorithmic}
			\renewcommand{\algorithmiccomment}[1]{{\bf{\% #1}}}
			\Statex
			\State Let $\{\bm{w}_{k,-1}=0\}$ and $\{\theta_{k \ell}(-1)=1\}$ for all $k$ and $\ell$. Define $\chi_k \triangleq \frac{1- \delta_k r_k }{\delta_k(1-r_k)}$.
			\Loop
			\State Generate $\{\mathbf{1}_{k \ell}(i)\}$ for all $k$ and $\ell$.
			\State \Comment{Stage 1 (Adaptation)}: For each $k$:
			\vspace{-2mm}
			\begin{align}
				\bm{\psi}_{k,i} &= \bm{w}_{k,i-1} + \mu \bm{u}^*_{k,i}[\bm{d}_k(i)-\bm{u}_{k,i}\bm{w}_{k,i-1}] \notag \\
				\bar{\bm{b}}_k(i) &\approx  (1-\mu\|\bm{u}_{k,i}\|^2)^2 \widetilde{\bm{w}}_{k,i-1}^* \bm{u}_{k,i}^* \bm{u}_{k,i} \widetilde{\bm{w}}_{k,i-1} \notag \\
				\widehat{\bm{w}}_{k,i}^o &= (1-\nu) \widehat{\bm{w}}_{k,i-1}^o+\nu \bm{\psi}_{k,i} \notag \\
				\widetilde{\bm{w}}_{k,i-1} &\approx \widehat{\bm{w}}_{k,i}^o-\bm{w}_{k,i-1} \notag
			\end{align} 
			\vspace{-4mm}
			\State \Comment{Stage 2 (Action Selection)}: For all $k$ and $\ell$,	
			\If{$\mathbf{1}_{k \ell}(i)=1$} 
			\State 	
			$\bm{a}_{k \ell}(i)=
			\begin{cases}
			1, & \text{if}~ \frac{\bar{\bm{b}}_k(i)}{c_k}> \frac{\chi_k}{\bm{\theta}_{\ell k}(i)}\\
			0, & \text{otherwise} 
			\end{cases}$
			\vspace{1mm}  
			\Else{ $\bm{a}_{k \ell}(i)=0$.}
			\EndIf
			\vspace{1mm}
			\State \Comment{Stage 3 (Reputation Update)}: For all $k$ and $\ell$,
			\vspace{-2mm}
			\begin{align}
				\bm{\theta}_{\ell k}(i+1) = &\mathbf{1}_{k \ell}(i) \cdot \max\{ r_k \bm{\theta}_{\ell k}(i) +(1-r_k)\bm{a}_{\ell k}(i),\varepsilon \}\notag \\
				&+ (1-\mathbf{1}_{k \ell}(i)) \bm{\theta}_{\ell k}(i) \notag
			\end{align}
			\vspace{-4mm}    
			\State \Comment{Stage 4 (Combination)}: For all $k$,
			\vspace{-2mm}
			\begin{align}
				\bm{w}_{k,i} = \alpha_k \bm{\psi}_{k,i} \!+\! (1-\alpha_k)\!\!\sum_{\ell \in \mathcal{N}_k}\! \mathbf{1}_{k \ell} (i) \!\left[ \bm{a}_{\ell k} (i) \bm{\psi}_{\ell,i} \!+\! \left(1-\bm{a}_{\ell k}(i)\right)\bm{\psi}_{k,i} \right] \notag
			\end{align}
			\vspace{-4mm}
			\EndLoop
			
		\end{algorithmic}
	\end{algorithm}	
			
	\section{Stability Analysis and Limiting Behavior}
	\label{stability}
	
	In this section, we study the stability of Algorithm~\ref{alg:summary} and its limiting performance after sufficiently long iterations. In order to pursue a mathematically tractable analysis, we assume that the maximum benefit $\bar{\bm{b}}_k(i)$ is estimated rather accurately by each agent $k$. That is, instead of the real-time implementation (\ref{bapprox})--(\ref{insapp}), we consider (\ref{bhat}) that
	\begin{align}
		\label{bhat2}
		\bar{\bm{b}}_k(i)=\|(I-\mu R_{u,k})  \widetilde{\bm{w}}_{k,i-1}\|^2_{R_{u,k}}
	\end{align}
	This consideration is motivated by taking the expectation of expression (\ref{bapprox}) given $\widetilde{\bm{w}}_{k,i-1}$:
	\begin{align}
		\label{expbapprox}
		&\mathbb{E}\Big[(1-\mu\|\bm{u}_{k,i}\|^2)^2 \widetilde{\bm{w}}_{k,i-1}^* \bm{u}_{k,i}^* \bm{u}_{k,i} \widetilde{\bm{w}}_{k,i-1}\Big| \widetilde{\bm{w}}_{k,i-1}\Big] \notag \\
		&=\widetilde{\bm{w}}_{k,i-1}^* \Big[R_{u,k}-2\mu \mathbb{E}[\bm{u}_{k,i}^* \bm{u}_{k,i}\bm{u}_{k,i}^* \bm{u}_{k,i}]+O(\mu^2)\Big]\widetilde{\bm{w}}_{k,i-1}
	\end{align}
	By subtracting (\ref{bhat2}) from (\ref{expbapprox}), we can see that the difference between (\ref{bhat2}) and (\ref{expbapprox}) is at least in the order of $\mu$:
	\begin{align}
		\widetilde{\bm{w}}_{k,i-1}^* \Big[2\mu (R_{u,k}^2- \mathbb{E}[\bm{u}_{k,i}^* \bm{u}_{k,i}\bm{u}_{k,i}^* \bm{u}_{k,i}])+O(\mu^2)\Big]\widetilde{\bm{w}}_{k,i-1}
	\end{align}
	Therefore, for small enough $\mu$ the expected value of the realization given by (\ref{bapprox}) approaches the true value (\ref{bhat2}). The performance degradation from the real-time implementation error will be illustrated by numerical simulations in Sec.~\ref{simulation}.	 
	
	Under this condition, and for $\mu\ll 1$, we shall argue that the operation of each agent is stable in terms of both the estimation cost \textit{and} the communication cost. Specifically, for the estimation cost, we will provide a condition on the step-size to ensure that the mean-square estimation error of each agent is asymptotically bounded. Using this result, we will further show that the communication cost for each agent $k$, and which is denoted by $J^\text{com}_k$, is upper bounded by a constant value that is unrelated to the transmission cost $c_k$. This result will be in contrast to the case of always cooperative agents where $J^\text{com}_k$ will be seen to increase proportionally with $c_k$. This is because in our case, the probability of cooperation, ${\rm Prob}\{\bm{a}_{k \ell}(i)=1\}$, will be shown to be upper bounded by the ratio $c^o/c_k$ for some constant $c^o$ independent of $c_k$. It will then follow that when the communication becomes expensive (large $c_k$), self-interested agents using the adaptive reputation scheme will become unwilling to cooperate. 
	
	\subsection{Estimation Performance}
	\label{Recursion}
	In conventional stability analysis for diffusion strategies, the combination coefficients are either assumed to be fixed, as done in~\cite{Catt10,Chen12,Chen13,Zhao12,Chen151,Chen152}, or their expectations conditioned on the estimates $\bm{w}_{k,i-1}$ are assumed to be constant, as in~\cite{Zhao15}. These conditions are not applicable to our scenario. When self-interested agents employ the nonlinear threshold-based action policy (\ref{stra2}), the ATC diffusion algorithm (\ref{ATC_A}) and (\ref{ATC2}) ends up involving a combination matrix whose entries are {\em dependent} on the estimates $\bm{w}_{k,i-1}$ (or the errors $\widetilde{\bm{w}}_{k,i-1}$). This fact introduces a new challenging aspect into the analysis of the distributed strategy. In the sequel, and in order to facilitate the stability and mean-square analysis of the learning process, we shall examine the performance of the agents in the network in three operating regions: the far-field region, the near-field region, and a region in between. We will show that the evolution of the estimation errors in these operating regions can be described by the same network error recursion given further ahead in (\ref{maxmum}). Following this conclusion, we will then be able to use (\ref{maxmum}) to provide general statements about
	stability and performance in the three regions.
	
	To begin with, referring to the listing in Algorithm~\ref{alg:summary}, we start by noting that we write down the following error recursions for each agent $k$:
	\begin{align}
		\label{errorRecur1}
		\widetilde{\bm{\psi}}_{k,i} &=(I-\mu \bm{u}_{k,i}^* \bm{u}_{k,i}) \widetilde{\bm{w}}_{k,i-1}-\mu \bm{u}_{k,i}^* \bm{v}_k(i) \\
		\label{errorRecur2}
		\widetilde{\bm{w}}_{k,i} &= \sum\limits_{\ell \in \mathcal{N}_k} \bm{g}_{\ell k}(i) \widetilde{\bm{\psi}}_{\ell,i}
	\end{align}
	where the combination coefficients $\{\bm{g}_{\ell k}(i),~\ell \in \mathcal{N}_k\}$ used in (\ref{errorRecur2}) are defined as follows and add up to one:
	\begin{align}
		\label{glk}
		\bm{g}_{\ell k}(i) &\triangleq
		(1-\alpha_k) \mathbf{1}_{\ell k} (i) \bm{a}_{\ell k}(i) \geq 0 \\
		\label{gkk}
		\bm{g}_{k k}(i) &\triangleq 1- \sum\limits_{\ell \in \mathcal{N}_k \setminus \{k\}} \bm{g}_{\ell k}(i) \geq 0 
	\end{align}
	Note that, in view of the pairing process, at most two of the coefficients $\{\bm{g}_{\ell k}(i)\}$ in (\ref{errorRecur2}) are nonzero in each time instant. The subsequent performance analysis will depend on evaluating the squared weighted norm of $\widetilde{\bm{\psi}}_{k,i}$ in (\ref{errorRecur1}), which is seen to be:
	\begin{align}
		\label{psinorm2}
		\|\widetilde{\bm{\psi}}_{k,i}\|^2_{R_{u,k}} &=
		\|(I-\mu \bm{u}_{k,i}^* \bm{u}_{k,i})\widetilde{\bm{w}}_{k,i-1}\|^2_{R_{u,k}} \notag \\
		&~~~+\mu^2 \|\bm{v}_k(i)\|^2 \text{Tr}(\bm{u}^*_{k,i}\bm{u}_{k,i} R_{u,k}) \notag \\
		&~~~-\mu \widetilde{\bm{w}}^*_{k,i-1}(I-\mu \bm{u}_{k,i}^* \bm{u}_{k,i}) R_{u,k}  \bm{u}_{k,i}^* \bm{v}_k(i) \notag \\
		&~~~-\mu \bm{v}^*_k(i) \bm{u}_{k,i} R_{u,k} (I-\mu \bm{u}_{k,i}^* \bm{u}_{k,i}) \widetilde{\bm{w}}_{k,i-1}
	\end{align}
	Now, from (\ref{errorRecur2}) we can use Jensen's inequality and the convexity of the squared norm to write
	\begin{align}
		\|\widetilde{\bm{w}}_{k,i}\|^2_{R_{u,k}} \leq  \sum\limits_{\ell \in \mathcal{N}_k} \bm{g}_{\ell k}(i) \| \widetilde{\bm{\psi}}_{\ell,i} \|^2_{R_{u,\ell}} 
	\end{align}
	so that, under expectation,
	\begin{align}
		\label{errrecur}
		\mathbb{E}&\|\widetilde{\bm{w}}_{k,i}\|^2_{R_{u,k}} \leq \sum\limits_{\ell \in \mathcal{N}_k} \mathbb{E}\left[\bm{g}_{\ell k}(i) \| \widetilde{\bm{\psi}}_{\ell,i} \|^2_{R_{u,\ell}} \right] 
	\end{align}
	We note that $\bm{g}_{\ell k}(i)$ is a function of the random variables $\{\mathbf{1}_{\ell k} (i), \bm{a}_{\ell k}(i)\}$. 
	The random pairing indicator $\mathbf{1}_{\ell k} (i)$ is independent of $\bm{u}_{\ell,i}$ and $\bm{v}_\ell(i)$. As for $\bm{a}_{\ell k}(i)$, which is determined by $\bar{\bm{b}}_\ell(i)$ and $\bm{\theta}_{k \ell}(i)$, we can see from expressions (\ref{bhat2}) and (\ref{repu2}) that both $\bar{\bm{b}}_\ell(i)$ and $\bm{\theta}_{k \ell}(i)$ only depend on the past data prior to time $i$ and therefore are independent of $\bm{u}_{\ell,i}$ and $\bm{v}_\ell(i)$. Consequently, $\bm{g}_{\ell k}(i)$ is independent of $\bm{u}_{\ell,i}$ and $\bm{v}_\ell(i)$, and we get
	\begin{align}
		\label{gpsi}
		\mathbb{E}&\left[\bm{g}_{\ell k}(i) \| \widetilde{\bm{\psi}}_{\ell,i} \|^2_{R_{u,\ell}} \right] \notag \\
		&= \mathbb{E}\Big[\bm{g}_{\ell k}(i) \Big(
		\|(I-\mu \bm{u}_{\ell,i}^* \bm{u}_{\ell,i})\widetilde{\bm{w}}_{\ell,i-1}\|^2_{R_{u,\ell}} \notag \\
		&~~~~~~~~~~~~~~~~+\mu^2 \|\bm{v}_\ell(i)\|^2 \text{Tr}(\bm{u}^*_{\ell,i}\bm{u}_{\ell,i} R_{u,\ell}) \notag \\
		&~~~~~~~~~~~~~~~~-\mu \widetilde{\bm{w}}^*_{\ell,i-1}(I-\mu \bm{u}_{\ell,i}^* \bm{u}_{\ell,i}) R_{u,\ell}  \bm{u}_{\ell,i}^* \bm{v}_\ell(i) \notag \\
		&~~~~~~~~~~~~~~~~-\mu \bm{v}^*_\ell(i) \bm{u}_{\ell,i} R_{u,\ell} (I-\mu \bm{u}_{\ell,i}^* \bm{u}_{\ell,i}) \widetilde{\bm{w}}_{\ell,i-1} \Big)\Big] \notag \\
		&=\mathbb{E}\Big[\bm{g}_{\ell k}(i) 
		\|(I-\mu \bm{u}_{\ell,i}^* \bm{u}_{\ell,i})\widetilde{\bm{w}}_{\ell,i-1}\|^2_{R_{u,\ell}}\Big] \notag \\
		&~~~+\mu^2 \mathbb{E}\bm{g}_{\ell k}(i) 
		\text{Tr}(R^2_{u,\ell}) \sigma^2_{v,\ell}
	\end{align}
	Using the fact that $\bm{u}_{\ell,i}$ is independent of $\bm{g}_{\ell k}(i)$ and $\widetilde{\bm{w}}_{\ell,i-1}$, we get
	\begin{align}
		\label{EG}
		&\mathbb{E}\Big[\bm{g}_{\ell k}(i) 
		\|(I-\mu \bm{u}_{\ell,i}^* \bm{u}_{\ell,i})\widetilde{\bm{w}}_{\ell,i-1}\|^2_{R_{u,\ell}}\Big] \notag \\
		&=\mathbb{E}\Big[\mathbb{E}\Big[\bm{g}_{\ell k}(i) 
		\|(I-\mu \bm{u}_{\ell,i}^* \bm{u}_{\ell,i})\widetilde{\bm{w}}_{\ell,i-1}\|^2_{R_{u,\ell}}\Big| \bm{g}_{\ell k}(i),\widetilde{\bm{w}}_{\ell,i-1} \Big]\Big] \notag \\
		&=\mathbb{E}\Big[\bm{g}_{\ell k}(i) \widetilde{\bm{w}}^*_{\ell,i-1}\Sigma_\ell \widetilde{\bm{w}}_{\ell,i-1}
		\Big] 
	\end{align}
	where
	\begin{align}
		\label{Sigma}
		\Sigma_\ell &\triangleq \mathbb{E} (I-\mu \bm{u}_{\ell,i}^* \bm{u}_{\ell,i}) R_{u,\ell} (I-\mu \bm{u}_{\ell,i}^* \bm{u}_{\ell,i}) \notag \\
		&=R_{u,\ell}-2\mu R^2_{u,\ell} + \mu^2 \mathbb{E}(\bm{u}_{\ell,i}^*\bm{u}_{\ell,i} R_{u,\ell} \bm{u}_{\ell,i}^*\bm{u}_{\ell,i})
	\end{align}
	If the regression data happens to be circular Gaussian, then a closed-form expression exists for the last fourth-order moment term in (\ref{Sigma}) in terms of $R_{u,\ell}$~\cite{Sayed08}. We will not assume Gaussian data. Instead, we will assume that the fourth-order moment is bounded and that the network is operating in the slow adaptation regime with a sufficiently small step-size so that terms that depend on higher-order powers of $\mu$ can be
	ignored in comparison to other terms. Under this assumption, we replace (\ref{EG}) by:
	\begin{align}
		\label{EG2}
		\mathbb{E}\Big[  \bm{g}_{\ell k}(i) 
		&\|(I-\mu \bm{u}_{\ell,i}^* \bm{u}_{\ell,i})\widetilde{\bm{w}}_{\ell,i-1}\|^2_{R_{u,\ell}}\Big] \notag \\
		&=\mathbb{E}\Big[\bm{g}_{\ell k}(i) \widetilde{\bm{w}}^*_{\ell,i-1}\Omega'_\ell \widetilde{\bm{w}}_{\ell,i-1} \Big] \notag \\
		&=\mathbb{E}\Big[\bm{g}_{\ell k}(i) \bar{\bm{b}}_\ell(i) \Big]
	\end{align}
	where $\Omega'_\ell=R_{u,\ell}(I-\mu R_{u,\ell})^2$ from (\ref{Omega}) and $\bar{\bm{b}}_\ell(i)=\widetilde{\bm{w}}^*_{\ell,i-1}\Omega'_\ell \widetilde{\bm{w}}_{\ell,i-1}$ from (\ref{bhat2}). Note that 
	\begin{align}
		\Sigma_\ell - \Omega_\ell' =  O(\mu^2)
	\end{align}
	Therefore, expression (\ref{gpsi}) becomes
	\begin{align}
		\label{psibound}
		&\mathbb{E}\!\left[\bm{g}_{\ell k}(i) \| \widetilde{\bm{\psi}}_{\ell,i} \|^2_{R_{u,\ell}} \right] \!=\! \mathbb{E}\!\Big[\bm{g}_{\ell k}(i) \bar{\bm{b}}_\ell(i) \!\Big]
		\!\!+\!\mu^2 \text{Tr}(R^2_{u,\ell}) \sigma^2_{v,\ell} \mathbb{E}\bm{g}_{\ell k}(i) 
	\end{align}
	To continue, we introduce the following lemma which provides useful bounds for $\bar{\bm{b}}_k(i)$.
	\vspace{2mm}
	\begin{lemma} [Bounds on $\bar{\bm{b}}_k(i)$] 
		\label{lem:boundb}
		The values of $\bar{\bm{b}}_k(i)$ defined by (\ref{bhat2}) are lower and upper bounded by:
		\begin{align}
			\label{buplowbound}
			\rho^2_\text{min} \|\widetilde{\bm{w}}_{k,i-1}\|^2_{R_{u,k}} \leq \bar{\bm{b}}_k(i) \leq \rho^2_\text{max} \|\widetilde{\bm{w}}_{k,i-1}\|^2_{R_{u,k}} 
		\end{align}
		where
		\begin{align}
			\label{rhomax}
			\rho_\text{max} &\triangleq \max\limits_{1\leq k \leq N} \lambda_\text{max}(I-\mu R_{u,k}) \\
			\label{rhomin}
			\rho_\text{min} &\triangleq \min\limits_{1\leq k \leq N} \lambda_\text{min}(I-\mu R_{u,k})
		\end{align}
	\end{lemma}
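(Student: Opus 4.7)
The plan is to treat $\bar{\bm{b}}_k(i)$ as a Rayleigh-type quadratic form and exploit the fact that $I-\mu R_{u,k}$ and $R_{u,k}$ share an eigenbasis. By definition,
\begin{align*}
\bar{\bm{b}}_k(i) = \widetilde{\bm{w}}_{k,i-1}^{\,*}(I-\mu R_{u,k})\,R_{u,k}\,(I-\mu R_{u,k})\,\widetilde{\bm{w}}_{k,i-1},
\end{align*}
and since both Hermitian matrices are polynomials in $R_{u,k}$, they commute with each other and with $R_{u,k}^{1/2}$. First I would factor the middle $R_{u,k}$ as $R_{u,k}^{1/2}R_{u,k}^{1/2}$ and push each factor outward to rewrite
\begin{align*}
\bar{\bm{b}}_k(i) = \bm{y}_k^{\,*}\,(I-\mu R_{u,k})^2\,\bm{y}_k, \qquad \bm{y}_k \triangleq R_{u,k}^{1/2}\widetilde{\bm{w}}_{k,i-1},
\end{align*}
so that $\|\bm{y}_k\|^2 = \|\widetilde{\bm{w}}_{k,i-1}\|_{R_{u,k}}^{2}$.

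Next I would apply the standard Rayleigh quotient inequality to the Hermitian matrix $(I-\mu R_{u,k})^2$, giving
\begin{align*}
\lambda_{\min}\!\bigl((I-\mu R_{u,k})^2\bigr)\,\|\bm{y}_k\|^2 \;\leq\; \bar{\bm{b}}_k(i) \;\leq\; \lambda_{\max}\!\bigl((I-\mu R_{u,k})^2\bigr)\,\|\bm{y}_k\|^2.
\end{align*}
The step here that needs a small word of care is the identification $\lambda_{\max/\min}((I-\mu R_{u,k})^2) = [\lambda_{\max/\min}(I-\mu R_{u,k})]^2$: this requires $I-\mu R_{u,k}\succeq 0$ so that all eigenvalues of $I-\mu R_{u,k}$ are nonnegative and the squaring map preserves their ordering. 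This holds in the slow-adaptation regime that the paper has already imposed (sufficiently small $\mu$), and it is the only structural assumption invoked.

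Finally, passing from an agent-specific bound to the network-wide bound is immediate from the definitions (\ref{rhomax})--(\ref{rhomin}): taking the maximum of $\lambda_{\max}(I-\mu R_{u,k})$ over $k$ can only inflate the upper constant, and taking the minimum of $\lambda_{\min}(I-\mu R_{u,k})$ over $k$ can only deflate the lower constant (again using nonnegativity to preserve the inequality under squaring). Substituting $\rho_{\max}^2$ and $\rho_{\min}^2$ in place of the per-agent eigenvalue squares yields (\ref{buplowbound}). There is no real obstacle in this argument; the only place a reader might pause is the commutativity/squaring step, so I would state it explicitly rather than leave it implicit.
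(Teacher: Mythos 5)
Your proposal is correct and follows essentially the same route as the paper's own proof: factor the middle $R_{u,k}$ as $R_{u,k}^{1/2}R_{u,k}^{1/2}$, use commutativity of $R_{u,k}^{1/2}$ with $I-\mu R_{u,k}$ to reduce the quadratic form to a Rayleigh quotient in $(I-\mu R_{u,k})^2$, and then pass to the network-wide constants. Your explicit remark that the identification $\lambda_{\max/\min}\bigl((I-\mu R_{u,k})^2\bigr)=[\lambda_{\max/\min}(I-\mu R_{u,k})]^2$ requires $I-\mu R_{u,k}\succeq 0$ (i.e., sufficiently small $\mu$) is a point the paper leaves implicit, and is a worthwhile clarification rather than a deviation.
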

	\begin{proof} 
		We introduce the eigendecomposition of the covariance matrix, $R_{u,k} \triangleq U_k \Lambda_k U_k^*$, where $U_k$ is a unitary matrix and $\Lambda_k \triangleq \text{diag}\{\lambda_{1,k},...,\lambda_{M,k}\}$ is a diagonal matrix with positive entries. Then, $R_{u,k}$ can be factored as 
		\begin{align}
			R_{u,k}=R_{u,k}^\frac{1}{2}R_{u,k}^\frac{1}{2}
		\end{align}
		where 
		\begin{align}
			R_{u,k}^\frac{1}{2} &\triangleq U_k \Lambda_k^\frac{1}{2} U_k^*,~~ 
			\Lambda_k^\frac{1}{2} \triangleq \text{diag}\{\sqrt{\lambda_{1,k}},...,\sqrt{\lambda_{M,k}}\}
		\end{align}
		It is easy to verify that $R_{u,k}^\frac{1}{2}$ and $I-\mu R_{u,k}$ are commutable. 
		Using this property, we obtain the following inequality:
		\begin{align}
			\label{bupper}
			&\widetilde{\bm{w}}^*_{k,i-1} \Omega_k' \widetilde{\bm{w}}_{k,i-1} \notag \\
			&~~=\widetilde{\bm{w}}_{k,i-1}^* (I-\mu R_{u,k})R_{u,k}^\frac{1}{2} R_{u,k}^\frac{1}{2} (I-\mu R_{u,k})\widetilde{\bm{w}}_{k,i-1} \notag \\
			&~~=\widetilde{\bm{w}}_{k,i-1}^* R_{u,k}^\frac{1}{2}(I-\mu R_{u,k})^2 R_{u,k}^\frac{1}{2}\widetilde{\bm{w}}_{k,i-1} \notag \\
			&~~\leq \lambda_\text{max}((I-\mu R_{u,k})^2) \widetilde{\bm{w}}_{k,i-1}^* R_{u,k} \widetilde{\bm{w}}_{k,i-1} \notag \\
			&~~\leq \rho^2_\text{max} \|\widetilde{\bm{w}}_{k,i-1}\|^2_{R_{u,k}}
		\end{align}
		We can obtain the lower bound for $\bar{\bm{b}}_k(i)$ by similar arguments.
	\end{proof}
	\vspace{2mm}
	Using the upper bound from Lemma~\ref{lem:boundb}, we have
	\begin{align}
		\label{term1}
		\mathbb{E}\Big[\bm{g}_{\ell k}(i) \bar{\bm{b}}_\ell(i)\Big] \leq \rho_\text{max}^2  \mathbb{E}\Big[\bm{g}_{\ell k}(i) \|\widetilde{\bm{w}}_{\ell,i-1}\|^2_{R_{u,\ell}}
		\Big]
	\end{align}
	Then, from (\ref{psibound}) and (\ref{term1}) we deduce from (\ref{errrecur}) that
	\begin{align}
		\label{singleexpress1}
		\mathbb{E}\|\widetilde{\bm{w}}_{k,i}\|^2_{R_{u,k}} 
		&\leq \rho_\text{max}^2 \! \sum\limits_{\ell \in \mathcal{N}_k} \! \mathbb{E}\Big[\bm{g}_{\ell k}(i) \|\widetilde{\bm{w}}_{\ell,i-1}\|^2_{R_{u,\ell}}
		\Big] \notag \\
		&~~~+\mu^2 \sum\limits_{\ell \in \mathcal{N}_k} \text{Tr}(R^2_{u,\ell}) \sigma^2_{v,\ell} \mathbb{E}\bm{g}_{\ell k}(i)
	\end{align}
	From (\ref{glk})--(\ref{gkk}), it is ready to check that $\{\mathbb{E}\bm{g}_{\ell k}(i)\}$ are nonnegative and add up to 1. Therefore, the second term on the right-hand side of (\ref{singleexpress1}) is a convex combination and has the following upper bound:
	\begin{align}
		\label{term2}
		\mu^2 \sum\limits_{\ell \in \mathcal{N}_k} 
		\text{Tr}(R^2_{u,\ell}) \sigma^2_{v,\ell} \mathbb{E}\bm{g}_{\ell k}(i) \leq \mu^2 \kappa 
	\end{align}
	where 
	\begin{align}
		\kappa \triangleq \max\limits_{1\leq k \leq N} \text{Tr}(R^2_{u,k}) \sigma^2_{v,k}
	\end{align}
	Therefore, we have
	\begin{align}
		\label{singleexpress}
		\mathbb{E}\|\widetilde{\bm{w}}_{k,i}\|^2_{R_{u,k}} 
		&\leq \rho_\text{max}^2 \! \sum\limits_{\ell \in \mathcal{N}_k} \! \mathbb{E}\Big[\bm{g}_{\ell k}(i) \|\widetilde{\bm{w}}_{\ell,i-1}\|^2_{R_{u,\ell}}
		\Big] +\mu^2 \kappa
	\end{align}
	Since the combination coefficients $\{\bm{g}_{\ell k}(i)\}$ and the estimation errors $\{\|\widetilde{\bm{w}}_{k,i-1}\|^2_{R_{u,k}}\}$ are related in a nonlinear manner (as revealed by (\ref{stra2}, (\ref{bhat2}), and (\ref{glk})--(\ref{gkk})), the analysis of Algorithm~\ref{alg:summary} becomes challenging. To continue, we examine the behavior of the agents in the three regions of operation mentioned before.
	
	During the initial stage of adaptation, agents are generally away from the target vector $w^o$ and therefore have large estimation errors. 
	We refer to this domain as the far-field region of operation, and we will characterize it by the condition:
	\begin{align}
		\label{largeest}
		\text{Far-Field:}~~~{\rm Prob}\left\{\|\widetilde{\bm{w}}_{k,i-1}\|^2_{R_{u,k}} > \frac{c_k \chi_k}{\rho_\text{min}^2\varepsilon}\right\} >\phi 
	\end{align}
	where $\chi_k$ and $\rho_\text{min}$ are defined in (\ref{chi}) and (\ref{rhomin}), respectively, and the parameter $\phi$ is close to 1 and in the range of $1 \geq \phi \gg 0$. 
	That is, in the far-field regime, the weighted squared norm of estimation error $\|\widetilde{\bm{w}}_{k,i-1}\|^2_{R_{u,k}}$ exceeds a threshold with high probability. We note that 
	the far-field condition (\ref{largeest}) can be more easily achieved when $c_k$ is small. We also note that when the event in (\ref{largeest}) holds with high-probability, then agent $k$ is enticed to cooperate
	since:
	\begin{align}
		\label{cond_a1}
		\|\widetilde{\bm{w}}_{k,i-1}\|^2_{R_{u,k}} > \frac{c_k \chi_k}{\rho_\text{min}^2 \varepsilon} 
		\stackrel{(a)}{\Rightarrow}~ &\bar{\bm{b}}_k(i) \bm{\theta}_{\ell k}(i) > c_k \chi_k \notag \\
		\stackrel{(b)}{\Rightarrow}~ &\bm{a}_{k \ell}(i)=1   
	\end{align}
	where step $(a)$ is by (\ref{buplowbound}) and the fact $\bm{\theta}_{\ell k}(i)\in [\varepsilon,1]$, and step $(b)$ is by (\ref{stra2}). Consequently, in the far-field region it holds with high likelihood that
	\begin{align}
		\label{coopapprox}
		{\rm Prob}\{\bm{a}_{k \ell}(i)=1\} \approx 1 
	\end{align}
	This approximation explains the phenomenon that with large estimation errors, agents are willing to cooperate and share estimates. We then say that under the far-field condition (\ref{largeest}), the combination coefficients in (\ref{glk})--(\ref{gkk}) can be expressed as
	\begin{align}
		\label{farcomblk}
		\bm{g}_{\ell k}(i) &=
		(1-\alpha_k) \mathbf{1}_{\ell k} (i) \\
		\label{farcombkk}
		\bm{g}_{k k}(i) &= 1- \sum\limits_{\ell \in \mathcal{N}_k \setminus \{k\}} \bm{g}_{\ell k}(i) \notag \\
		&=\alpha_k+(1-\alpha_k) \mathbf{1}_{k k} (i)
	\end{align}
	with expectation values as follows:
	\begin{align}
		g_{\ell k} &\triangleq \mathbb{E} \bm{g}_{\ell k}(i)=
		(1-\alpha_k) p_{\ell k} \notag \\
		g_{k k} &\triangleq \mathbb{E} \bm{g}_{k k}(i) = \alpha_k + (1-\alpha_k) p_{k k}
	\end{align}
	In this case, expression (\ref{singleexpress}) becomes
	\begin{align}
		\label{farfieldexpress}
		\mathbb{E}\|\widetilde{\bm{w}}_{k,i}\|^2_{R_{u,k}} 
		&\leq \rho_\text{max}^2 \sum\limits_{\ell \in \mathcal{N}_k} g_{\ell k}\mathbb{E} \|\widetilde{\bm{w}}_{\ell,i-1}\|^2_{R_{u,\ell}}
		+\mu^2 \kappa
	\end{align}
	where we used the independence between $\mathbf{1}_{\ell k} (i)$ and $\|\widetilde{\bm{w}}_{\ell,i-1}\|^2_{R_{u,\ell}}$. 
	If we stack $\{\|\widetilde{\bm{w}}_{k,i}\|^2_{R_{u,k}}\}$ into a vector $\bm{\mathcal{X}}_i \triangleq \text{col}\{\|\widetilde{\bm{w}}_{1,i}\|^2_{R_{u,1}},...,\|\widetilde{\bm{w}}_{N,i}\|^2_{R_{u,N}}\}$ and collect the combination coefficients $\{g_{\ell k}\}$ into a left-stochastic matrix $G$, then we obtain the network error recursion as:
	\begin{align}
		\label{absrecur}
		\mathbb{E}\bm{\mathcal{X}}_i \preceq \rho_\text{max}^2 G^T  (\mathbb{E} \bm{\mathcal{X}}_{i-1})+\mu^2 \kappa e
	\end{align}
	where $e \in 1_N$ is the vector of all ones and the notation $x \preceq y$ denotes that the components of vector $x$ are less than or equal to the corresponding components of vector $y$.
	Taking the maximum norm from both sides and using the left-stochastic matrix property $\|G^T\|_\infty=1$, we obtain:
	\begin{align}
		\label{maxmum}
		\max_{1\leq k \leq N}\mathbb{E}\|\widetilde{\bm{w}}_{k,i}\|^2_{R_{u,k}} \triangleq \|\mathbb{E}\bm{\mathcal{X}}_i\|_\infty < \rho_\text{max}^2 \|\mathbb{E} \bm{\mathcal{X}}_{i-1}\|_\infty+\mu^2 \kappa 
	\end{align}
	Let us consider the long-term scenario $i \gg 1$.
	The far-field regime (\ref{largeest}) is more likely to occur when $c_k$ is small. Let us now examine the situation in which the communication cost is expensive ($c_k \gg 0$). In this case, the agents will operate in a near-field regime, which we characterize by the condition:
	\begin{align}
		\label{nearfield}
		\text{Near-Field:}~~~{\rm Prob}\left\{\| \widetilde{\bm{w}}_{k,i-1}\|^2_{R_{u,k}} < \frac{c_k \chi_k}{\rho_\text{max}^2}\right\} >\phi 
	\end{align}
	where $\rho_\text{max}$ is defined in (\ref{rhomax}). We note that 
	\begin{align}
		\label{cond_a2}
		\|\widetilde{\bm{w}}_{k,i-1}\|^2_{R_{u,k}} < \frac{c_k \chi_k}{\rho_\text{max}^2} 
		\Rightarrow~ &\bar{\bm{b}}_k(i) \bm{\theta}_{\ell k}(i) < c_k \chi_k \notag \\
		\Rightarrow~ &\bm{a}_{k \ell}(i)=0   
	\end{align}
	In this regime, it then holds with high likelihood that
	\begin{align}
		{\rm Prob}\{\bm{a}_{k \ell}(i)=0\} \approx 1 
	\end{align}
	and the combination coefficients in (\ref{glk})--(\ref{gkk}) then become:
	\begin{align}
		\label{nearcomb}
		\bm{g}_{k k}(i) = 1, ~\bm{g}_{\ell k}(i) = 0
	\end{align}
	This means that agents will now be operating non-cooperatively since the benefit of sharing estimates is small relative to the expensive communication cost $c_k \gg 0$. Using similar arguments to (\ref{farfieldexpress})--(\ref{maxmum}), we arrive at the same network recursion (\ref{maxmum}) for this regime.
	
	However, there exists a third possibility that for moderate values of $c_k$, agents operate at a region that does not belong to neither the far-field nor the near-field regimes. In this region, agents will be choosing to cooperate or not depending on their local conditions. To facilitate the presentation, let us introduce the notation $I_{\mathbb{E}}$ for an indicator function over event $\mathbb{E}$ where $I_{\mathbb{E}}=1$ if event $\mathbb{E}$ occurs and $I_{\mathbb{E}}=0$ otherwise. Then, the action policy (\ref{stra2}) can be rewritten as:
	\begin{align}
		\bm{a}_{k \ell}(i)= I_{\bar{\bm{b}}_k(i) \bm{\theta}_{\ell k}(i)> c_k \chi_k} 
	\end{align}
	From Lemma~\ref{lem:boundb}, we know that $\bar{\bm{b}}_k(i)$ is sandwiched within an interval of width $(\rho^2_\text{max}-\rho^2_\text{min}) \|\widetilde{\bm{w}}_{k,i-1}\|^2_{R_{u,k}}$. As the error $\|\widetilde{\bm{w}}_{k,i-1}\|^2_{R_{u,k}}$ becomes smaller after sufficient iterations, the feasible region of $\bar{\bm{b}}_k(i)$ shrinks. Therefore, it is reasonable to assume that $\bar{\bm{b}}_k(i)$ becomes concentrated around its mean $\bar{b}_k(i) \triangleq \mathbb{E}\bar{\bm{b}}_k(i)$ for $i \gg 0$. It follows that we can approximate $\bm{a}_{k \ell}(i)$ by replacing $\bar{\bm{b}}_k(i)$ by $\bar{b}_k(i)$:
	\begin{align}
		\label{approxaction}
		\bm{a}_{k \ell}(i) \approx I_{\bar{b}_k(i) \bm{\theta}_{\ell k}(i)> c_k \chi_k},~~\text{as}~~i\gg 1
	\end{align}
	To continue, we further assume that after long iterations, which means that repeated interactions between agents have occurred for many times, the reputation scores gradually become stationary, i.e., we can model $\{\bm{\theta}_{\ell k}(i)\}$ as
	\begin{align}
		\bm{\theta}_{\ell k}(i) = \theta_{\ell k} + \bm{n}_{\ell k}(i),~~\text{as}~~i\gg 1
	\end{align}
	where $\theta_{\ell k} \triangleq \mathbb{E}\bm{\theta}_{\ell k}(i)$ and $\bm{n}_{\ell k}(i)$ is a random disturbance which is independent and identically distributed (i.i.d.) over time $i$ and assumed to be independent of all other random variables. Under this modeling, we can obtain the independence between $\bm{n}_{\ell k}(i)$ and $\widetilde{\bm{w}}_{\ell,i-1}$ and write that for $k \neq \ell$
	\begin{align}
		&\mathbb{E}\Big[\bm{g}_{\ell k}(i) \|\widetilde{\bm{w}}_{\ell,i-1}\|^2_{R_{u,\ell}}
		\Big] \notag \\ 
		&= \mathbb{E} \Big[ (1-\alpha_k) \mathbf{1}_{\ell k} (i) \bm{a}_{\ell k}(i) \|\widetilde{\bm{w}}_{\ell,i-1}\|^2_{R_{u,\ell}}\Big] \notag \\
		&\approx (1-\alpha_k) p_{\ell k} \mathbb{E} \Big[I_{\bar{b}_\ell(i) (\theta_{\ell k}+ \bm{n}_{\ell k}(i))> c_k \chi_k} \cdot \|\widetilde{\bm{w}}_{\ell,i-1}\|^2_{R_{u,\ell}}\Big] \notag \\
		&= g'_{\ell k} \mathbb{E} \|\widetilde{\bm{w}}_{\ell,i-1}\|^2_{R_{u,\ell}}
	\end{align}
	where 
	\begin{align}
		g'_{\ell k} &\triangleq (1-\alpha_k) p_{\ell k} \mathbb{E} \big[ I_{\bar{b}_k(i) (\theta_{\ell k}+ \bm{n}_{\ell k}(i))> c_k \chi_k}\big] \geq 0
	\end{align}
	For $\ell = k$, we can use similar arguments to write
	\begin{align}
		&\mathbb{E}\Big[\bm{g}_{k k}(i) \|\widetilde{\bm{w}}_{k,i-1}\|^2_{R_{u,k}}
		\Big] \notag \\ 
		&= \mathbb{E} \Big[\Big( 1- \sum\limits_{\ell \in \mathcal{N}_k \setminus \{k\}} (1-\alpha_k) \mathbf{1}_{\ell k} (i) \bm{a}_{\ell k}(i)\Big) \|\widetilde{\bm{w}}_{k,i-1}\|^2_{R_{u,k}}\Big] \notag \\
		&\approx g'_{\ell k} \mathbb{E} \|\widetilde{\bm{w}}_{k,i-1}\|^2_{R_{u,\ell}}
	\end{align}
	where 
	\begin{align}
		g'_{k k} &\triangleq 1-\sum\limits_{\ell \in \mathcal{N}_k \setminus \{k\}} g'_{\ell k} \geq 0  
	\end{align}
	Therefore, expression (\ref{singleexpress}) becomes
	\begin{align}
		\label{midfieldexpress}
		\mathbb{E}\|\widetilde{\bm{w}}_{k,i}\|^2_{R_{u,k}} 
		&\leq \rho_\text{max}^2 \sum\limits_{\ell \in \mathcal{N}_k} g'_{\ell k}\mathbb{E} \|\widetilde{\bm{w}}_{\ell,i-1}\|^2_{R_{u,\ell}}
		+\mu^2 \kappa
	\end{align}
	Following similar arguments to (\ref{farfieldexpress})--(\ref{maxmum}), we again arrive at the same network recursion (\ref{maxmum}). 
	
	We therefore conclude that after long iterations, the estimation performance can be approximated by recursion (\ref{maxmum}) for general values of $c_k$. Consequently, sufficiently small step-sizes that satisfy the following condition guarantees the stability of the network error $\|\mathbb{E}\bm{\mathcal{X}}_i\|_\infty$: 
	\begin{align}
		\label{farcond}
		\rho_\text{max}^2<1 \Leftrightarrow \mu < \frac{2}{\max\limits_{1\leq k \leq N} \lambda_\text{max}(R_{u,k})}
	\end{align}
	which leads to
	\begin{align}
		\label{supsmallc}
		\limsup\limits_{i \rightarrow \infty} \|\mathbb{E}\bm{\mathcal{X}}_i\|_\infty \leq \frac{\mu^2 \kappa }{1-\rho^2_\text{max}}
	\end{align}
	Recalling that we are assuming sufficiently small $\mu$, we have 
	\begin{align}
		\rho_\text{max}^2 
		&=\max_{1\leq k \leq N} \max_{1\leq m \leq M} (1-\mu \lambda_m(R_{u,k}))^2 \notag \\
		&\approx \max_{1\leq k \leq N} \max_{1\leq m \leq M} 1-2\mu \lambda_m(R_{u,k}) \notag \\
		&\triangleq 1-2 \mu \beta
	\end{align}
	where
	\begin{align}
		\label{beta}
		\beta &\triangleq \min_{1\leq k \leq N} \lambda_\text{min}(R_{u,k})
	\end{align}
	It is straightforward to verify that the bound on the right-hand side of (\ref{supsmallc}) is upper-bounded by $\mu \kappa /2\beta$, which is $O(\mu)$. Consequently, we conclude that 
	\begin{align}
		\label{limsup}
		\limsup\limits_{i \rightarrow \infty} \mathbb{E} \|\widetilde{\bm{w}}_{k,i}\|^2_{R_{u,k}}=O(\mu)
	\end{align}
	which establishes the mean-square stability of the network under the assumed conditions and for small
	step-sizes.	
			
	\subsection{Expected Individual and Public Cost}
	
	In this section we assess the probability that agents will opt for cooperation after sufficient time has elapsed and the network has become stable. 
	From (\ref{limsup}), we know that after sufficient iterations, the MSE cost at each agent $k$ is bounded, say, as
	\begin{align}
		\label{indubound}
		\mathbb{E} \|\widetilde{\bm{w}}_{k,i} \|^2_{R_{u,k}} \leq \eta \mu,~~\text{for}~~i \gg 1
	\end{align}
	for some constant $\eta$. 
	Based on this result, we can upper bound the cooperation rate of every agent $k$, defined as the probability that agent $k$ would select $\bm{a}_{k\ell}(i)=1$ for every pairing agent $\ell$.
	
	\begin{thm} (\textit{Upper bound on cooperation rate}) 
		\label{thm:uppbound}
		After sufficient iterations, the cooperation rate for each agent $k$ is upper bounded by:
		\begin{align}
			\label{thm2}
			{\rm Prob}\{\bm{a}_{k \ell}(i)=1\}
			\leq \min\left\{\frac{c^o}{c_k},1\right\}  ,~~\text{for any}~~c_k < \infty
		\end{align}
		where $c^o$ is independent of $c_k$ and defined as
		\begin{align}
			c^o \triangleq \frac{ \eta \mu \rho^2_\text{max} }{\chi_\text{min}},~~\chi_\text{min} \triangleq \min_{1\leq k \leq N} \chi_k
		\end{align}
	\end{thm}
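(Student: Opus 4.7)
The plan is to chain together the threshold action rule, the deterministic upper bound on $\bar{\bm{b}}_k(i)$ from Lemma~\ref{lem:boundb}, and a Markov-type inequality applied to the stabilized mean-square error in (\ref{indubound}). Concretely, from the best response rule (\ref{stra2}) the event $\{\bm{a}_{k\ell}(i)=1\}$ is equivalent to $\{\bar{\bm{b}}_k(i)\,\bm{\theta}_{\ell k}(i) > c_k\chi_k\}$. Since $\bm{\theta}_{\ell k}(i)\in[\varepsilon,1]\subseteq[0,1]$, this event is contained in $\{\bar{\bm{b}}_k(i) > c_k\chi_k\}$, and the upper bound $\bar{\bm{b}}_k(i)\le \rho_{\text{max}}^2\|\widetilde{\bm{w}}_{k,i-1}\|_{R_{u,k}}^2$ from Lemma~\ref{lem:boundb} then implies the inclusion
\begin{align*}
\{\bm{a}_{k\ell}(i)=1\} \;\subseteq\; \Big\{\|\widetilde{\bm{w}}_{k,i-1}\|_{R_{u,k}}^2 > \tfrac{c_k\chi_k}{\rho_{\text{max}}^2}\Big\}.
\end{align*}

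Next I would apply Markov's inequality to this last event. Because $\|\widetilde{\bm{w}}_{k,i-1}\|_{R_{u,k}}^2$ is a nonnegative random variable with finite mean (for $i\gg 1$ by the stability result (\ref{limsup})), the assumed bound (\ref{indubound}), $\mathbb{E}\|\widetilde{\bm{w}}_{k,i}\|_{R_{u,k}}^2\le \eta\mu$, yields
\begin{align*}
{\rm Prob}\{\bm{a}_{k\ell}(i)=1\}
\;\le\;\frac{\mathbb{E}\|\widetilde{\bm{w}}_{k,i-1}\|_{R_{u,k}}^2}{c_k\chi_k/\rho_{\text{max}}^2}
\;\le\;\frac{\eta\mu\,\rho_{\text{max}}^2}{c_k\chi_k}.
\end{align*}
Bounding $\chi_k$ below by $\chi_{\text{min}}=\min_k\chi_k$ gives the bound $c^o/c_k$ with $c^o \triangleq \eta\mu\rho_{\text{max}}^2/\chi_{\text{min}}$, which is manifestly independent of $c_k$. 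Finally, since any probability is at most $1$, I would combine the two bounds to obtain the claimed $\min\{c^o/c_k,\,1\}$.

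The only non-mechanical element is ensuring that each inclusion/inequality in the chain is deterministic (pathwise) so that the expectation is not perturbed by the dependence between $\bar{\bm{b}}_k(i)$, $\bm{\theta}_{\ell k}(i)$, and $\widetilde{\bm{w}}_{k,i-1}$; the key observation that makes this work is that Lemma~\ref{lem:boundb} and the bound $\bm{\theta}_{\ell k}(i)\le 1$ hold sample-wise, so no independence assumption is needed. I do not expect a serious obstacle here; the main subtlety is merely to cite (\ref{indubound}) as an already-established stabilization statement (valid for $i\gg 1$) rather than re-deriving it, and to record that the step-size condition (\ref{farcond}) is in force so that $\rho_{\text{max}}^2<1$ and $\eta$ is finite.
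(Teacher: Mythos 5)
Your proposal is correct and follows essentially the same route as the paper's proof: the same chain of event inclusions using $\bm{\theta}_{\ell k}(i)\leq 1$ and the upper bound from Lemma~\ref{lem:boundb}, followed by Markov's inequality applied to $\|\widetilde{\bm{w}}_{k,i-1}\|^2_{R_{u,k}}$ with mean bound (\ref{indubound}), the replacement of $\chi_k$ by $\chi_\text{min}$, and the final intersection with the trivial bound of $1$. Your added remark that each inclusion holds sample-wise (so no independence between $\bar{\bm{b}}_k(i)$, $\bm{\theta}_{\ell k}(i)$, and $\widetilde{\bm{w}}_{k,i-1}$ is needed) is a correct observation that the paper leaves implicit.
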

	\begin{proof} 
		From (\ref{stra2}), the cooperation rate of agent $k$ is bounded by:
		\begin{align}
			\label{coopbound}
			{\rm Prob}\{\bm{a}_{k \ell}(i)=1\} &={\rm Prob} \left\{\bar{\bm{b}}_k(i) \bm{\theta}_{\ell k}(i) > c_k \chi_k\right\} \notag \\
			&\leq {\rm Prob} \left\{\bar{\bm{b}}_k(i) > c_k \chi_k \right\} \notag \\
			&\leq {\rm Prob} \Bigg\{ \| \widetilde{\bm{w}}_{k,i-1}\|^2_{R_{u,k}}> \frac{c_k \chi_k}{\rho^2_\text{max}}\Bigg\}
		\end{align}
		where we used the fact that $\bm{\theta}_{k\ell}(i) \leq 1$ and the upper bound on $\bar{\bm{b}}_k(i)$ from (\ref{buplowbound}). Since $\| \widetilde{\bm{w}}_{k,i-1}\|^2_{R_{u,k}}$ is a nonnegative random variable with $\mathbb{E} \|\widetilde{\bm{w}}_{k,i-1} \|^2_{R_{u,k}} \leq \eta \mu$, we can use Markov's inequality~\cite{Papoulis02} to write 
		\begin{align}
			\label{bbound}
			{\rm Prob} &\left\{\|\widetilde{\bm{w}}_{k,i-1}\|^2_{R_{u,k}} > \frac{c_k \chi_k}{\rho^2_\text{max}}\right\} \leq 
			\frac{ \eta \mu \rho^2_\text{max} }{c_k \chi_k} \leq \frac{c^o}{c_k}
		\end{align}
		Combining (\ref{coopbound}) and (\ref{bbound}), we obtain that the cooperation rate for $i\gg 1$ is upper bounded by $c^o/c_k$. Using the fact that ${\rm Prob}\{\bm{a}_{k \ell}(i)=1\}\leq 1$, we get (\ref{thm2}).
	\end{proof}
	\vspace{2mm}
	\begin{figure}[!t]
		\centering
		\includegraphics[width=2.3in]{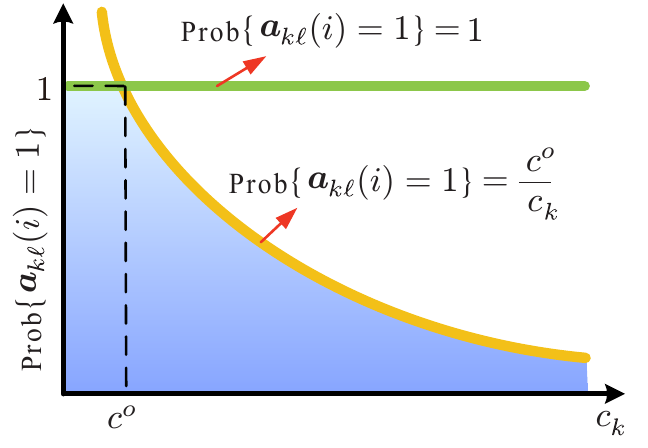}
		\caption{The feasible region of the probability of cooperation ${\rm Prob}\{\bm{a}_{k \ell}(i)=1\}$ for agent $k$.}
		\label{feasible}
	\end{figure}
	
	As illustrated in Fig.~\ref{feasible}, the feasible region of ${\rm Prob}\{\bm{a}_{k \ell}(i)=1\}$ is the intersection area of (\ref{thm2}) and $0\leq {\rm Prob}\{\bm{a}_{k \ell}(i)=1\}\leq 1$. 
	We note that $c^o$ has an order of $\mu$. It describes the fact that when $\mu$ is small, the long term estimation errors reduce and agents have less willingness to cooperate and thus the cooperation rate ${\rm Prob}\{\bm{a}_{k \ell}(i)=1\}$ becomes low.
	Now, the expected communication cost for each agent $k$ is
	\begin{align}
		\mathbb{E}J^\text{com}_k &=\sum_{\ell \in \mathcal{N}_k \setminus \{k\}} \mathbb{E} [\mathbf{1}_{k \ell}(i)\bm{a}_{k \ell}(i)]  c_k \notag \\
		&=\sum_{\ell \in \mathcal{N}_k \setminus \{k\}} p_{k\ell} \cdot {\rm Prob}\{\bm{a}_{k \ell}(i)=1\}  c_k
	\end{align}
	where $\{k\}$ is excluded from the summation since there is no communication cost required for using own data. From Theorem~\ref{thm:uppbound}, we know that when $c_k$ is large, the expected communication cost has an upper bound which is independent of $c_k$, i.e., for $c_k \geq c^o$,
	\begin{align}
		\mathbb{E}J^\text{com}_k &\leq\sum_{\ell \in \mathcal{N}_k \setminus \{k\}} p_{k\ell} \frac{c^o}{c_k} c_k = (1- p_{kk}) c^o
	\end{align}
	On the other hand, the expected estimation cost for each agent $k$ for $i \gg 0$ is:
	\begin{align}
		\mathbb{E}J^\text{est}_k&=\mathbb{E} | \bm{d}_k(i) - \bm{u}_{k,i} \bm{w}_{k,i-1} |^2 \notag \\
		&=\mathbb{E} \| \widetilde{\bm{w}}_{k,i-1} \|^2_{R_{u,k}}+\sigma^2_{v,k} \notag \\
		&\leq \eta \mu +\sigma^2_{v,k}
	\end{align}
	where we use (\ref{qualcost1}) and the fact that $\widetilde{\bm{w}}_{k,i-1}$ is independent of $\{\bm{d}_k(i),\bm{u}_{k,i}\}$.
	It follows that for $i \gg 1$ the expected extended cost at each agent $k$ is bounded by 
	\begin{align}
		\label{uppic}
		\mathbb{E} \left[ J_k (\bm{w}_{k,i-1}, \bm{a}_{k \ell}(i)) \right] &= \mathbb{E}J^\text{est}_k + \mathbb{E}J^\text{com}_k \notag \\
		&\leq \eta \mu +\sigma^2_{v,k} + (1- p_{kk}) c^o
	\end{align}
	If we now define the public cost as the accumulated expected extended cost over the network: 
	\begin{align}
		\label{glob}
		J^\text{pub} &\triangleq \sum_{k=1}^{N} \mathbb{E}\left[ J_k (\bm{w}_{k,i-1}, \bm{a}_{k \ell}(i)) \right] 
	\end{align}
	then
	\begin{align}
		\label{uppgrep}
		J^\text{pub} \leq N \eta \mu     +\sum\limits_{k=1}^{N}\sigma^2_{v,k}+
		c^o \sum\limits_{k=1}^{N}(1- p_{kk})
	\end{align}
	which shows that $J^\text{pub}$ is uniformly bounded by a constant value independent of $c_k$.
	
	For comparison purposes, let us consider a uniform transmission cost $c_k=c$ and consider the case in (\ref{ATC_C}) where the agents are controlled so that they always share estimates with their paired agents, i.e., $\bm{a}_{k \ell}(i)=1$ for all $k$, $\ell$, and $i$ whenever $\mathbf{1}_{k\ell}(i)=1$. Then, the random combination coefficients are the same as (\ref{farcomblk})--(\ref{farcombkk}) defined in the far-field since agents always choose to cooperate. Let us denote the network mean-square-deviation (MSD) of cooperative agents by 
	\begin{align}
		\text{MSD}_\text{coop} &= \lim\limits_{i \rightarrow \infty} \frac{1}{N} \sum_{k=1}^N \mathbb{E} \| \widetilde{\bm{w}}_{k,i} \|^2
	\end{align}
	which has a closed-form expression provided in~\cite{Zhao15}. Therefore, we can characterize the performance of cooperative agents by $\text{MSD}_\text{coop}$ and note that for $i \gg 0$, we have
	\begin{align}
		\sum_{k=1}^N \mathbb{E} \| \widetilde{\bm{w}}_{k,i} \|^2_{R_{u,k}} \geq N \beta \cdot
		\text{MSD}_\text{coop}
	\end{align}
	where $\beta$ is defined in (\ref{beta}). Consequently, after sufficient iterations, the expected public cost for cooperative agents becomes
	\begin{align}
		\label{lpgobe}
		J_\text{coop}^\text{pub} 
		&=\! \sum\limits_{k=1}^{N} \left[ \mathbb{E} \| \widetilde{\bm{w}}_{k,i} \|^2_{R_{u,k}}+\sigma^2_{v,k}+ c  \sum_{\ell \in \mathcal{N}_k \setminus \{k\}} \mathbb{E} \mathbf{1}_{k \ell}(i) \right] \notag \\
		&\geq\! N \beta \cdot \text{MSD}_\text{coop}
		\!+\!\sum\limits_{k=1}^{N} \sigma^2_{v,k}\!+\! c \sum\limits_{k=1}^{N} (1-p_{kk}) 
	\end{align}
	Comparing (\ref{uppgrep}) with (\ref{lpgobe}), we get $J_\text{coop}^\text{pub} \geq J^\text{pub}$ whenever
	\begin{align}
		&~N \beta \cdot \text{MSD}_\text{coop}
		+ c \sum\limits_{k=1}^{N} (1-p_{kk}) \geq N\eta \mu +
		c^o \sum\limits_{k=1}^{N}(1- p_{kk})  \notag \\
		\Leftrightarrow &~ c \geq c^o+ \frac{N\left[
			\eta \mu - \beta \cdot \text{MSD}_\text{coop} \right]}{\sum\limits_{k=1}^{N} (1-p_{kk})}
	\end{align}
	In other words, when the transmission cost $c$ exceeds the above threshold, self-interested agents using the reputation protocol obtain a lower expected public cost than cooperative agents running the cooperative ATC strategy. 
	
	\section{Numerical Results}
	\label{simulation}
	
	We consider a network with $N=20$ agents. The network topology is shown in Fig.~\ref{topology}. The noise variance profile at the agents is shown in Fig.~\ref{noise}. In the simulations, we consider that the transmission cost $c_k=c$ is uniform and the matrix $R_{u,k}=R_u$ is uniform and diagonal.  Figure~\ref{w0ru} shows the entries of the target vector $w^o$ of size $M=10$ and the diagonal entries of $R_u$. We set the step-size at $\mu=0.01$ and the combination weight at $\alpha_k=0.5$ for all $k$. The parameters used in the reputation update rule are $\varepsilon=0.1$ and the initial reputation scores $\bm{\theta}_{k,\ell}(0)=1$ for all agents $k$ and $\ell$. The discount factor $\delta_k$ and the smoothing factor $r_k$ for all $k$ are set to $0.99$ and $0.95$, respectively, and the forgetting factor $\nu$ is set to $0.01$. 
	
	In this simulation, we consider a distributed random-pairing mechanism as follows. At each time instant, each agent independently and uniformly generates a random continuous value from $[0,1]$. Then, the agent holding the smallest value in the network, say agent $k$, is paired with the neighboring agent in $\mathcal{N}_k$ who has not been paired and holds the smallest value in $\mathcal{N}_k \setminus \{k\}$. Then, similar steps are followed by the agent who has not been paired and holds the second smallest value in the network. The random-pairing procedure continues until all agents complete these steps.
	
	In Fig.~\ref{simug}, we simulate the learning curves of instantaneous public costs for small and large communication costs. It is seen that in both cases, using the proposed reputation protocol, the network of self-interested agents reaches the lowest public cost. Therefore, the network is efficient in terms of public cost. Furthermore, we note that in these two cases, there is only small difference between the performance of the reputation protocol using Algorithm~\ref{alg:summary} and the real-time implementation. To see the general effect of $c$, in Fig.~\ref{gencost} we simulate the public cost in steady-state versus the communication cost $c$. We observe that for large and small $c$, the reputation protocol performs as well as the non-sharing and the cooperative network, respectively. The only imperfection occurs around the switching region.
	Without real-time implementations, the reputation protocol has a small degradation in the range of $c \in [10^{-2},10^{-1}]$. While using real-time implementations (\ref{bapprox})--(\ref{insapp}), we can see that the degradation happens in a wider range of $c \in [10^{-4},10^{-1}]$.
	In Fig.~\ref{avgbenefit}, we simulate the network benefit defined as the largest achievable $\bar{\bm{b}}_k(i)$ averaged over all agents in steady state, i.e., 
	\begin{align}
		b^\text{net} \triangleq \lim\limits_{i \rightarrow \infty}\frac{1}{N}\sum_{k=1}^{N} \mathbb{E} [\bar{\bm{b}}_k(i)]
	\end{align}
	where $\bar{\bm{b}}_k(i)$ follows the real-time implementation in (\ref{bapprox})--(\ref{insapp}). We note that the network benefits, $b^\text{net}$, for the non-sharing and cooperative cases are invariant for different communication costs since the behavior of agents is independent of $c$. Moreover, as expected in the form of (\ref{bapprox}), cooperative networks generally give smaller steady-state estimation errors and thus result in lower $b^\text{net}$. When the communication cost $c$ increases, self-interested agents following the proposed reputation protocol have less willingness to cooperate, and therefore have larger estimation errors and predict higher values of the benefit $\bar{\bm{b}}_k(i)$ in general.

	\begin{figure}[!t]
		\centering
		\includegraphics[width=2.5in]{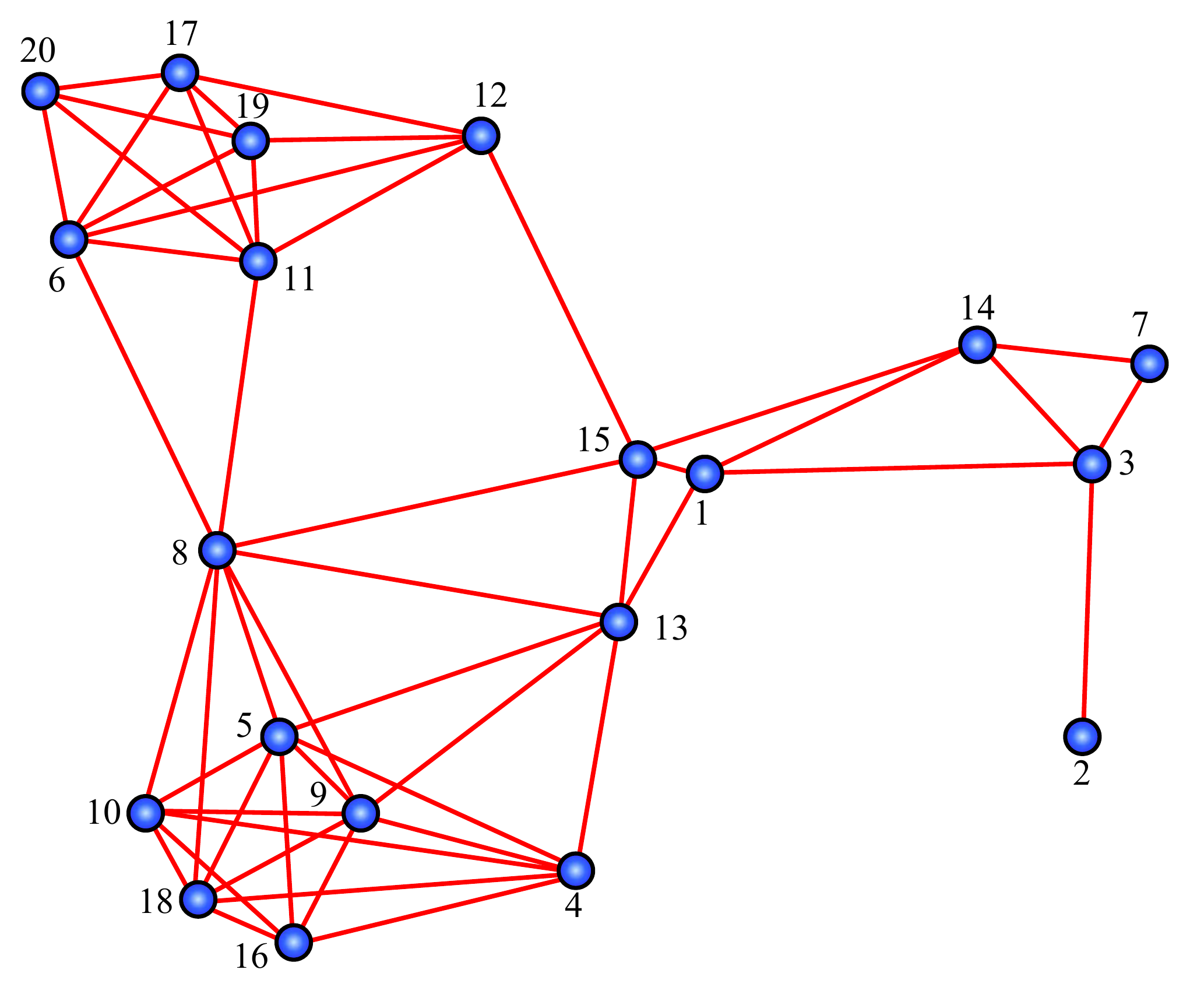}
		\caption{A network topology with $N=20$ agents.}
		\label{topology}
	\end{figure}
	
	\begin{figure}[!t]
		\centering
		\includegraphics[width=3.5in]{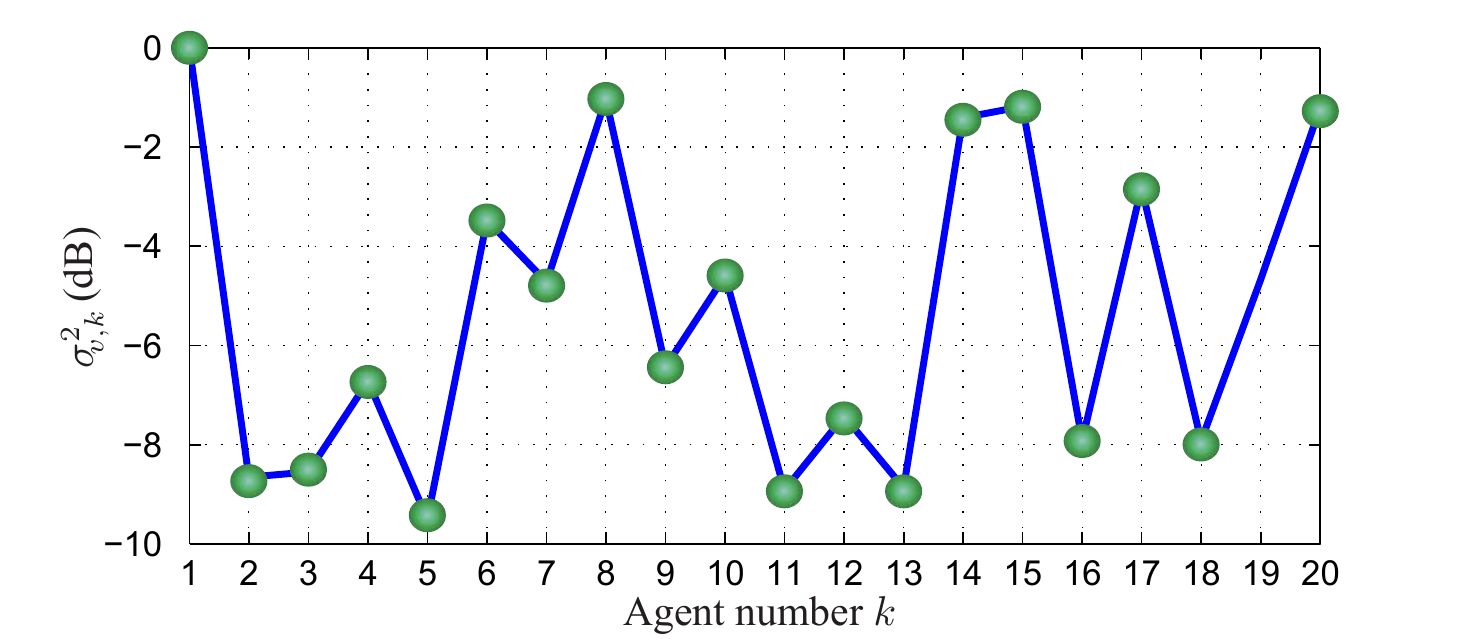}
		\caption{The noise variance profile used in the simulations.}
		\label{noise}
	\end{figure}
	
	\begin{figure}[!t]
		\centering
		\includegraphics[width=3.7in]{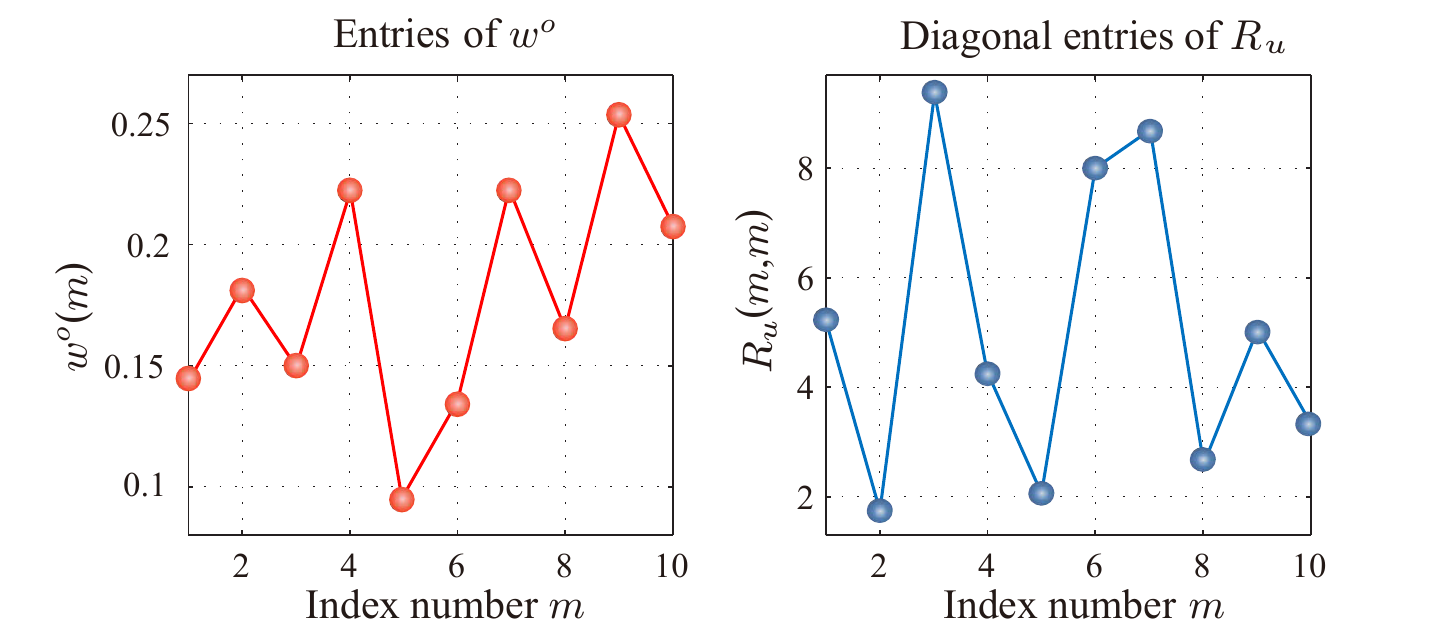}
		\caption{Entries of $w^o$ and $R_u$ used in the simulations.}
		\label{w0ru}
	\end{figure}
	
	\begin{figure}[!t]
		\vspace{-5mm}
		\centering
		\begin{tabular}{c}
			\subfloat[Small communication cost $c=0.0001$.]
			{\includegraphics[width=3.0in]{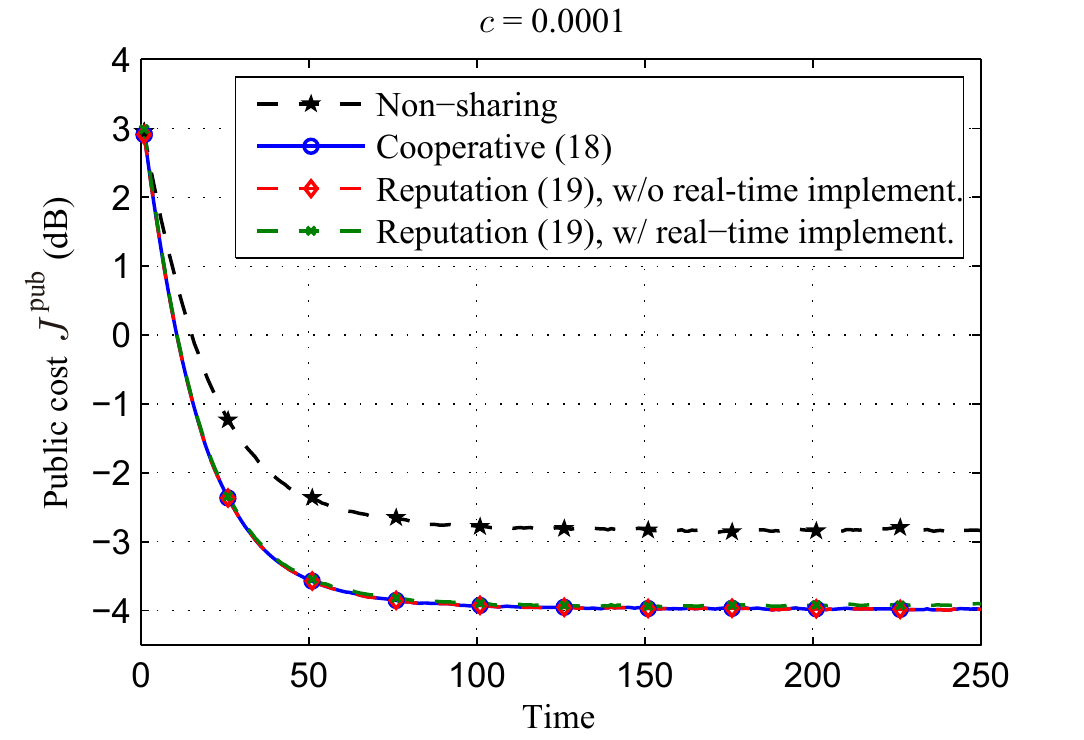}} \\
			\subfloat[Large communication cost $c=0.5$.]
			{\includegraphics[width=3.0in]{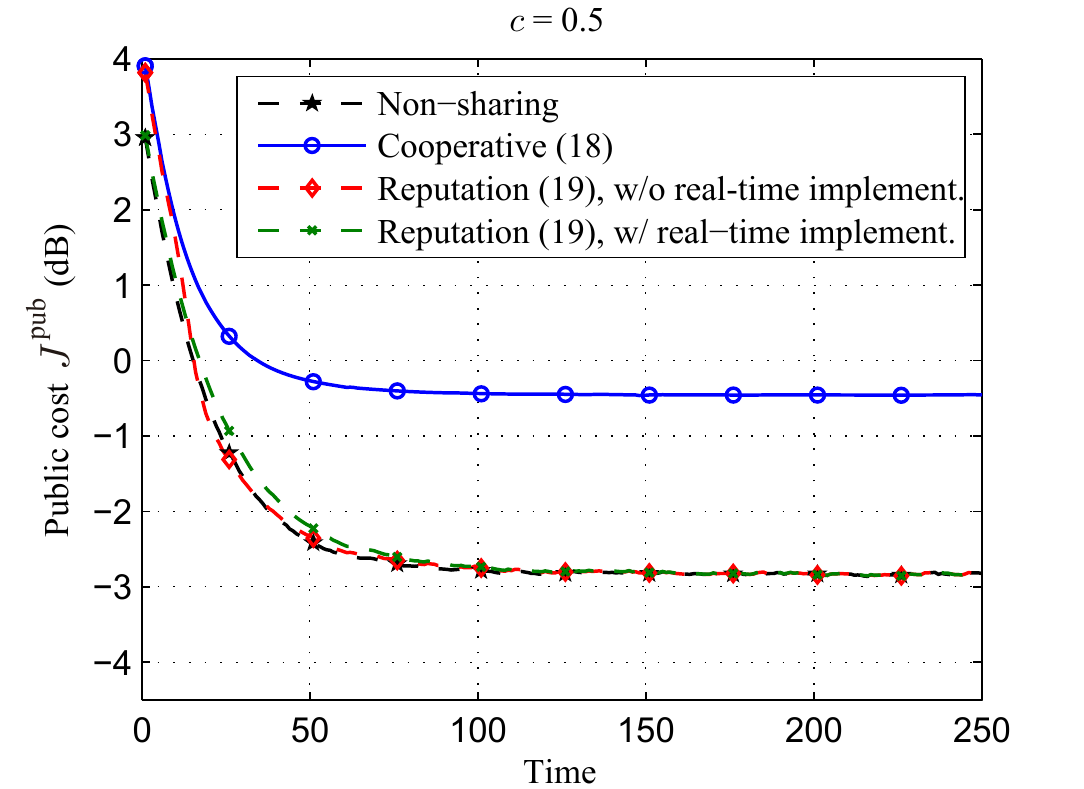}}
		\end{tabular}
		\caption{Learning curve of public cost $J^\text{pub}$ for small and large communication costs.}
		\label{simug}
	\end{figure}
	
	\begin{figure}[t]
		\centering
		\includegraphics[width=3.1in]{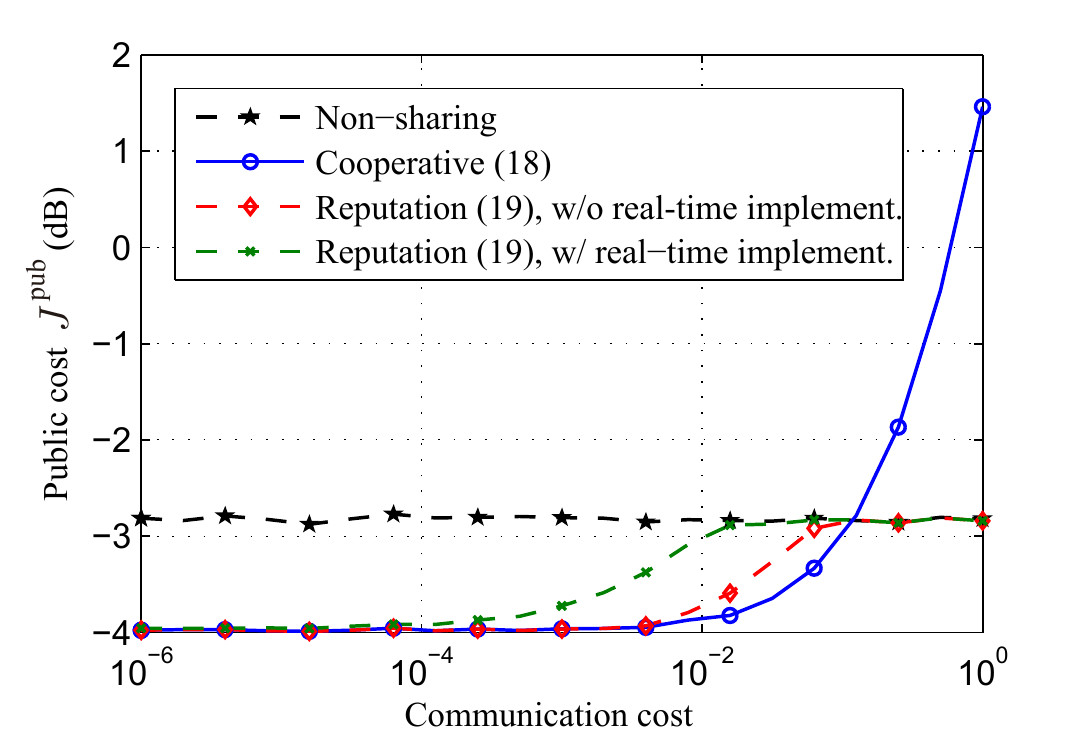}
		\vspace{-2mm}
		\caption{Simulations of steady-state public costs $J^\text{pub}$.}
		\label{gencost}
	\end{figure}
	
	\begin{figure}[!t]
		\vspace{-3mm}
		\centering
		\includegraphics[width=3.1in]{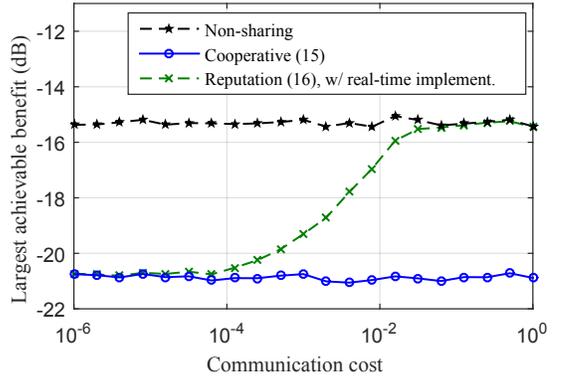}
		\vspace{-2mm}
		\caption{Simulations of average largest achievable benefit $\bar{\bm{b}}_k(i)$ over agents.}
		\label{avgbenefit}
	\end{figure}						
	
	\section{Conclusion}
	In this work, we studied the distributed information-sharing network in the presence of self-interested agents. We showed that without using any historical information to predict future actions, self-interested agents with bounded rationality become non-cooperative and refuse to share information. To entice them to cooperate, we developed an adaptive reputation protocol which turns the best response of self-interested agents into an action-choosing rule.		
			
	\appendices
	\section{}
	
	We can represent the best response rule (\ref{best}) by some mapping function, $f_k(\cdot)$, that maps the available realizations of $\bm{\theta}_{\ell k}(i)$ and $\bm{w}_{k,i-1}$ to $a_{k \ell}(i)$, i.e.,
	\begin{align}
		a_{k \ell}(i)=f_k(\theta_{\ell k}(i),w_{k,i-1})
	\end{align}
	We show the form of the resulting $f_k(\cdot)$ later in (\ref{stra}). For now, we note that construction (\ref{best}) requires us to find the condition for 
	\begin{align}
		J_{k,i}^{\infty' }[a_{k\ell}(i)=1|\bm{w}_{k,i-1}]<J_{k,i}^{\infty' }[a_{k\ell}(i)=0|\bm{w}_{k,i-1}]
	\end{align}
	Using (\ref{Pk}), this condition translates into requiring
	\begin{align}
		\label{cond}
		&J_{k,i}^{\infty' }[a_{k\ell}(i)=1|\bm{w}_{k,i-1}] - J_{k,i}^{\infty' }[a_{k\ell}(i)=0|\bm{w}_{k,i-1}] \notag \\
		&~~~=c_k+\sum_{t=i+1}^{\infty} \delta_k^{t-i} \triangle J_k^1(t)+\sum_{t=i+1}^{\infty} \delta_k^{t-i}\triangle J_k^2(t)<0
	\end{align}
	where, for simplicity, we introduced
	\begin{align}
		\triangle J^1_k(t) &\triangleq  \mathbb{E} \left[
		\bm{a}_{k \ell}(t) c_k \Big{|} 
		\bm{w}_{k,i-1}, \bm{a}_{k \ell}(i)=1, \mathbb{K}_i \right]\notag \\
		&~~~-\mathbb{E} \left[
		\bm{a}_{k \ell}(t) c_k \Big{|} 
		\bm{w}_{k,i-1}, \bm{a}_{k \ell}(i)=0, \mathbb{K}_i \right] \\
		\label{deltaJ2}
		\triangle J^2_k(t) &\triangleq  \mathbb{E} \left[
		J_{k}^\text{act}(\bm{a}_{\ell k}(t)) \Big{|} 
		\bm{w}_{k,i-1}, \bm{a}_{k \ell}(i)=1, \mathbb{K}_i \right] \notag \\
		&~~~- \mathbb{E} \left[
		J_{k}^\text{act}(\bm{a}_{\ell k}(t)) \Big{|} 
		\bm{w}_{k,i-1}, \bm{a}_{k \ell}(i)=0, \mathbb{K}_i \right]
	\end{align}
	Following similar argument to (\ref{prison1}), we combine Assumptions \ref{ass:rationalAgents} and \ref{ass:rationalAgentsEx} to conclude that 
	\begin{align}
		\label{futureaction}
		\bm{a}_{k \ell}(t) = a_{k \ell}(i), ~~\text{for}~t\geq i
	\end{align}
	so that
	\begin{align}
		\label{j1}
		\triangle J^1_k(t)=c_k
	\end{align}
	Similarly, using the assumption $\bm{w}_{k,t-1}=\bm{w}_{k,i-1}$ for $t\geq i$ from (\ref{assump1}), we have
	\begin{align}
		\label{temp}
		\mathbb{E} &\left[J_{k}^\text{act}(\bm{a}_{\ell k}(t)) \Big| \bm{w}_{k,i-1}, \bm{a}_{k \ell}(i)=j, \mathbb{K}_i \right] \notag \\
		&~~~~~~=\mathbb{E} \left[J_{k}^\text{act}(\bm{a}_{\ell k}(t)=1) \Big| \bm{w}_{k,t-1}, \bm{a}_{k \ell}(i)=j, \mathbb{K}_i \right] \notag \\
		&~~~~~~~~~\times B(\bm{a}_{\ell k}(t)=1) \notag \\
		&~~~~~~~~~+\mathbb{E} \left[J_{k}^\text{act}(\bm{a}_{\ell k}(t)=0) \Big| \bm{w}_{k,t-1}, \bm{a}_{k \ell}(i)=j, \mathbb{K}_i \right]  \notag \\
		&~~~~~~~~~\times \left(1-B(\bm{a}_{\ell k}(t)=1)\right)
	\end{align}
	for $j=0$ or 1. From (\ref{BPred}) and (\ref{BPred1}), we can write
	\begin{align}
		\mathbb{E} &\left[J_{k}^\text{act}(\bm{a}_{\ell k}(t)=0) \Big| \bm{w}_{k,t-1}, \bm{a}_{k \ell}(i)=j, \mathbb{K}_i \right] \notag \\
		&~~~~~~~~~~~~~~~~~~~~~~~~~= \mathbb{E} \left[J_{k}^\text{act}(\bm{a}_{\ell k}(t)=0) \Big| \bm{w}_{k,t-1} \right] \notag \\
		\mathbb{E} &\left[J_{k}^\text{act}(\bm{a}_{\ell k}(t)=1) \Big| \bm{w}_{k,t-1}, \bm{a}_{k \ell}(i)=j, \mathbb{K}_i \right] \notag \\
		&~~~~~~~~~~~~~~~~~~~~~~~~~= \mathbb{E} \left[J_{k}^\text{act}(\bm{a}_{\ell k}(t)=1) \Big| \bm{w}_{k,t-1} \right] \notag
	\end{align}
	since these two conditional expectations depend only on $\bm{w}_{k,t-1}$.
	Therefore, expression (\ref{temp}) becomes:
	\begin{align}
		\mathbb{E} &\left[J_{k}^\text{act}(\bm{a}_{\ell k}(t)) \Big| \bm{w}_{k,i-1}, \bm{a}_{k \ell}(i)=j, \mathbb{K}_i \right] \notag \\
		&~~~=\mathbb{E} \left[J_{k}^\text{act}(\bm{a}_{\ell k}(t)=0) \Big| \bm{w}_{k,t-1}\right] \notag \\
		&~~~~~~+B(\bm{a}_{\ell k}(t)=1) \cdot \Big(\mathbb{E} \left[J_{k}^\text{act}(\bm{a}_{\ell k}(t)=1) \Big| \bm{w}_{k,t-1}\right]
		\notag \\
		&~~~~~~~~~~~~~~~~~~~~~~~~~~~~~~~-\mathbb{E} \left[J_{k}^\text{act}(\bm{a}_{\ell k}(t)=0) \Big| \bm{w}_{k,t-1}\right]\Big) \notag \\
		&~~~=\mathbb{E} \left[J_{k}^\text{act}(\bm{a}_{\ell k}(t)=0) \Big| \bm{w}_{k,t-1}\right]- B(\bm{a}_{\ell k}(t)=1)
		\bm{b}_k(t) \notag
	\end{align}
	where we used the definition for $\bm{b}_k(t)$ from (\ref{benefit}). 
	We note that using (\ref{benefit2}) we have $\bm{b}_k(t)=\bm{b}_k(i)$ due to the assumption $\bm{w}_{k,t-1}=\bm{w}_{k,i-1}$. As a result, it follows that 
	\begin{align}
		\mathbb{E} &\left[J_{k}^\text{act}(\bm{a}_{\ell k}(t)) \Big| \bm{w}_{k,i-1}, \bm{a}_{k \ell}(i)=j, \mathbb{K}_i \right] \notag \\
		&~~=\mathbb{E} \left[J_{k}^\text{act}(\bm{a}_{\ell k}(t)=0) \Big| \bm{w}_{k,t-1}\right] -B(\bm{a}_{\ell k}(t)=1)
		\bm{b}_k(i) \notag
	\end{align}
	Let us denote by $\bm{\theta}_{k \ell}^j(t)$ the value of $\bm{\theta}_{k \ell}(t)$ at time $t$ if $\bm{a}_{k \ell}(i)=j$ is selected at time $i$. We utilize the assumption $\bm{\theta}_{\ell k}(t) = \bm{\theta}_{\ell k}(i)$ to rewrite $B(\bm{a}_{\ell k}(t)=1)$ as
	\begin{align}
		B(\bm{a}_{\ell k}(t)=1) =\bm{\theta}_{k \ell}^j(t) \bm{\theta}_{\ell k}(i) 
	\end{align}
	It then follows that
	\begin{align}
		&\mathbb{E} \left[
		J_{k}^\text{act}(\bm{a}_{\ell k}(t)) \Big{|} 
		\bm{w}_{k,i-1}, \bm{a}_{k \ell}(i)=j, \mathbb{K}_i \right] \notag \\
		&~~~= \mathbb{E} \left[J_{k}^\text{act}(\bm{a}_{\ell k}(t)=0) \Big| \bm{w}_{k,t-1}\right] - \bm{\theta}_{k \ell}^j(t) \bm{\theta}_{\ell k}(i) \bm{b}_k(i) \notag
	\end{align}
	Therefore, using (\ref{deltaJ2}) we get 
	\begin{align}
		\label{deltaJ22}
		\triangle J^2_k(t) = [\bm{\theta}_{k \ell}^0(t)-\bm{\theta}_{k \ell}^1(t)] \bm{\theta}_{\ell k}(i) \bm{b}_k(i)
	\end{align}
	Now, we recall that following Assumption~\ref{ass:rationalAgents} and the considered scenario $\mathbf{1}_{k \ell}(i)=1$, we have $\mathbf{1}_{k \ell}(t)=\mathbf{1}_{k \ell}(i)=1$ for $t\geq i$. As a result, the reputation update in (\ref{repu2}) can be approximated by expression (\ref{repuo}) for sufficiently small $\varepsilon$. Then, under (\ref{futureaction}), the future reputation scores $\bm{\theta}_{k \ell}^0(t)$ and $\bm{\theta}_{k \ell}^1(t)$ are given by:
	\begin{align}
		\label{deltathe}
		\bm{\theta}_{k \ell}^0(t) &= \bm{\theta}_{k \ell}(i) r_k^{t-i} \\
		\bm{\theta}_{k \ell}^1(t) &= \bm{\theta}_{k \ell}(i) r_k^{t-i}+(1-r_k)\sum\limits_{q=0}^{t-i-1}r_k^q \notag \\
		&= \bm{\theta}_{k \ell}(i) r_k^{t-i}+(1-r_k^{t-i})
	\end{align}
	Therefore, expression (\ref{deltaJ22}) becomes
	\begin{align}
		\label{j2}
		\triangle J_{k}^2(t) &
		= -(1-r_k^{t-i}) \bm{\theta}_{\ell k}(i) \bm{b}_k(i)
		,~~\text{for~~} t>i
	\end{align}
	Using (\ref{j1}) and (\ref{j2}), agent $k$ then chooses $\bm{a}_{k \ell}(i)=1$ if
	\begin{align}
		\label{condition}
		&~~~~c_k+\sum_{t=i+1}^{\infty} \delta_k^{t-i} c_k-\sum_{t=i+1}^{\infty} \delta_k^{t-i}(1-r_k^{t-i}) \bm{\theta}_{\ell k}(i) \bm{b}_k(i)<0 \notag \\
		&\Leftrightarrow \sum_{t=i}^{\infty} \delta_k^{t-i} c_k< \sum_{t=i+1}^{\infty} \delta_k^{t-i}(1-r_k^{t-i}) \bm{\theta}_{\ell k}(i) \bm{b}_k(i) \notag \\
		&\Leftrightarrow \frac{c_k}{1-\delta_k} < \bm{\theta}_{\ell k}(i) \bm{b}_k(i) \delta_k \cdot  \left(\frac{1}{1-\delta_k}-\frac{r_k}{1-\delta_k r_k}\right) \notag \\
		&\Leftrightarrow \bm{\gamma}_k(i)\triangleq \frac{\bm{b}_k(i)}{c_k}> \frac{\chi_k}{\bm{\theta}_{\ell k}(i)}
	\end{align}
	where we introduced 
	\begin{align}
		\chi_k& \triangleq \frac{1- \delta_k r_k }{\delta_k(1-r_k)}
	\end{align} 
	The best response rule $f_k(\cdot)$ therefore becomes
	\begin{align}
		a_{k \ell}(i)=
		\begin{cases}
			1, & \text{if}~ \bm{\gamma}_k(i)\triangleq \frac{\bm{b}_k(i)}{c_k}> \frac{\chi_k}{\bm{\theta}_{\ell k}(i)}\\
			0, & \text{otherwise} 
		\end{cases}
	\end{align}		
			
	\bibliographystyle{IEEEtran}
	\bibliography{IEEEabrv,refs}

\begin{thebibliography}{10}
\providecommand{\url}[1]{#1}
\csname url@samestyle\endcsname
\providecommand{\newblock}{\relax}
\providecommand{\bibinfo}[2]{#2}
\providecommand{\BIBentrySTDinterwordspacing}{\spaceskip=0pt\relax}
\providecommand{\BIBentryALTinterwordstretchfactor}{4}
\providecommand{\BIBentryALTinterwordspacing}{\spaceskip=\fontdimen2\font plus
\BIBentryALTinterwordstretchfactor\fontdimen3\font minus
  \fontdimen4\font\relax}
\providecommand{\BIBforeignlanguage}[2]{{%
\expandafter\ifx\csname l@#1\endcsname\relax
\typeout{** WARNING: IEEEtran.bst: No hyphenation pattern has been}%
\typeout{** loaded for the language `#1'. Using the pattern for}%
\typeout{** the default language instead.}%
\else
\language=\csname l@#1\endcsname
\fi
#2}}
\providecommand{\BIBdecl}{\relax}
\BIBdecl

\bibitem{Yu132}
C.-K. Yu, M.~van~der Schaar, and A.~H. Sayed, ``Reputation design for adaptive
  networks with selfish agents,'' in \emph{Proc. IEEE SPAWC}, Darmstadt,
  Germany, Jun. 2013, pp. 160--164.

\bibitem{Braca08}
P.~Braca, S.~Marano, and V.~Matta, ``Enforcing consensus while monitoring the
  environment in wireless sensor networks,'' \emph{{IEEE} Trans. Signal
  Process.}, vol.~56, no.~7, pp. 3375--3380, Jul. 2008.

\bibitem{Dimakis10}
A.~G. Dimakis, S.~Kar, J.~M.~F. Moura, M.~G. Rabbat, and A.~Scaglione, ``Gossip
  algorithms for distributed signal processing,'' \emph{Proceedings of the
  IEEE}, vol.~98, no.~11, pp. 1847--1864, Nov. 2010.

\bibitem{Sardellitti10}
S.~Sardellitti, M.~Giona, and S.~Barbarossa, ``Fast distributed average
  consensus algorithms based on advection-diffusion processes,'' \emph{{IEEE}
  Trans. Signal Process.}, vol.~58, no.~2, pp. 826--842, Feb. 2010.

\bibitem{Olfati07}
R.~Olfati-Saber, J.~A. Fax, and R.~M. Murray, ``Consensus and cooperation in
  networked multi-agent systems,'' \emph{Proceedings of the IEEE}, vol.~95,
  no.~1, pp. 215--233, Jan. 2007.

\bibitem{Kar09}
S.~Kar and J.~M.~F. Moura, ``Distributed consensus algorithms in sensor
  networks: Link failures and channel noise,'' \emph{{IEEE} Trans. Signal
  Process.}, vol.~57, no.~1, pp. 355--369, Jan. 2009.

\bibitem{Kar11}
------, ``Convergence rate analysis of distributed gossip (linear parameter)
  estimation: Fundamental limits and tradeoffs,'' \emph{IEEE J. Sel. Topics in
  Signal Process.}, vol.~5, no.~5, pp. 674--690, Aug. 2011.

\bibitem{Nedic09}
A.~Nedic and A.~Ozdaglar, ``Distributed subgradient methods for multi-agent
  optimization,'' \emph{{IEEE} Trans. Autom. Control}, vol.~54, no.~1, pp.
  48--61, Jan. 2009.

\bibitem{Xiao04}
L.~Xiao and S.~Boyd, ``Fast linear iterations for distributed averaging,''
  \emph{Systems and Control Letters}, vol.~53, no.~1, pp. 65--78, Sep. 2004.

\bibitem{Boyd06}
S.~Boyd, A.~Ghosh, B.~Prabhakar, and D.~Shah, ``Randomized gossip algorithms,''
  \emph{{IEEE} Trans. Inf. Theory}, vol.~52, no.~6, pp. 2508--2530, Jun. 2006.

\bibitem{Lopes08}
C.~G. Lopes and A.~H. Sayed, ``Diffusion least-mean squares over adaptive
  networks: Formulation and performance analysis,'' \emph{{IEEE} Trans. Signal
  Process.}, vol.~56, no.~7, pp. 3122--3136, Jul. 2008.

\bibitem{Sayed14}
A.~H. Sayed, ``Adaptive networks,'' \emph{Proceedings of the IEEE}, vol. 102,
  no.~4, pp. 460--497, Apr. 2014.

\bibitem{Sayed132}
A.~H. Sayed, S.-Y. Tu, J.~Chen, X.~Zhao, and Z.~Towfic, ``Diffusion strategies
  for adaptation and learning over networks,'' \emph{{IEEE} Signal Process.
  Mag.}, vol.~30, no.~3, pp. 155--171, May 2013.

\bibitem{Sayed142}
A.~H. Sayed, \emph{Adaptation, Learning, and Optimization over Networks}.\hskip
  1em plus 0.5em minus 0.4em\relax Foundations and Trends in Machine Learning,
  vol.~7, no. 4-5, pp. 311-801, NOW Publishers, Jul. 2014.

\bibitem{Sayed143}
------, ``Diffusion adaptation over networks,'' \emph{{\rm in} Academic Press
  Library in Signal Processing, \textnormal{R. Chellapa and S. Theodoridis,
  Eds.}}, vol.~4, pp. 323--454, Academic Press, Elsevier, 2014.

\bibitem{Catt10}
F.~S. Cattivelli and A.~H. Sayed, ``Diffusion {LMS} strategies for distributed
  estimation,'' \emph{{IEEE} Trans. Signal Process.}, vol.~58, no.~3, pp.
  1035--1048, Mar. 2010.

\bibitem{Chouvardas11}
S.~Chouvardas, K.~Slavakis, and S.~Theodoridis, ``Adaptive robust distributed
  learning in diffusion sensor networks,'' \emph{{IEEE} Trans. Signal
  Process.}, vol.~59, no.~10, pp. 4692--4707, Oct. 2011.

\bibitem{Takahashi10}
N.~Takahashi and I.~Yamada, ``Link probability control for probabilistic
  diffusion least-mean squares over resource-constrained networks,'' in
  \emph{Proc. IEEE ICASSP}, Dallas, TX, Mar. 2010, pp. 3518--3521.

\bibitem{Chen12}
J.~Chen and A.~H. Sayed, ``Diffusion adaptation strategies for distributed
  optimization and learning over networks,'' \emph{{IEEE} Trans. Signal
  Process.}, vol.~60, no.~8, pp. 4289--4305, Aug. 2012.

\bibitem{Gharehshiran13}
O.~N. Gharehshiran, V.~Krishnamurthy, and G.~Yin, ``Distributed energy-aware
  diffusion least mean squares: Game-theoretic learning,'' \emph{IEEE J. Sel.
  Topics in Signal Process.}, vol.~7, no.~5, pp. 821--836, Oct. 2013.

\bibitem{Yu13}
C.-K. Yu, M.~van~der Schaar, and A.~H. Sayed, ``Distributed spectrum sensing in
  the presence of selfish users,'' in \emph{Proc. IEEE CAMSAP}, Saint Martin,
  Dec. 2013, pp. 392--395.

\bibitem{Filali09}
A.~Filali, A.~Hafid, and M.~Gendreau, ``Bandwidth and computing resources
  provisioning for grid applications and services,'' in \emph{Proc. IEEE ICC},
  Dresden, Germany, Jun. 2009, pp. 1--6.

\bibitem{Bekkerman11}
R.~Bekkerman, M.~Bilenko, and J.~Langford, \emph{Scaling up Machine Learning,
  Parallel and Distributed Approaches}.\hskip 1em plus 0.5em minus 0.4em\relax
  Cambridge University Press, 2011.

\bibitem{Acemoglu11}
D.~Acemoglu and A.~Ozdaglar, ``Opinion dynamics and learning in social
  networks,'' \emph{Dyn. Games Appl.}, vol.~1, no.~1, pp. 3--49, Mar. 2011.

\bibitem{Jadbabaie12}
A.~Jadbabaie, P.~Molavi, A.~Sandroni, and A.~Tahbaz-Salehi, ``Non­bayesian
  social learning,'' \emph{Game. Econ. Behav.}, vol.~76, no.~1, pp. 210--225,
  Sep. 2012.

\bibitem{Zhao12C2}
X.~Zhao and A.~H. Sayed, ``Learning over social networks via diffusion
  adaptation,'' in \emph{Proc. Asilomar Conference on Signals, Systems, and
  Computers}, Pacific Grove, CA, USA, Nov. 2012, pp. 709--713.

\bibitem{Gigerenzer02}
R.~S. G~Gigerenzer, \emph{Bounded Rationality: The Adaptative Toolbox}.\hskip
  1em plus 0.5em minus 0.4em\relax MIT Press, 2002.

\bibitem{Foster99}
D.~Foster and R.~Vohra, ``Regret in the on-line decision problem,'' \emph{Games
  and Economic Behavior}, vol.~29, no. 1-2, pp. 7--35, Oct. 1999.

\bibitem{Namvar13}
O.~N. Gharehshiran, V.~Krishnamurthy, and G.~Yin, ``Distributed tracking of
  correlated equilibria in regime switching noncooperative games,''
  \emph{{IEEE} Trans. Autom. Control}, vol.~58, no.~10, pp. 2435--2450, Oct.
  2013.

\bibitem{Mailath06}
G.~J. Mailath and L.~Samuelson, \emph{Repeated Games and Reputations: Long-Run
  Relationships}.\hskip 1em plus 0.5em minus 0.4em\relax Oxford University
  Press, 2006.

\bibitem{Xu12}
J.~Xu and M.~van~der Schaar, ``Social norm design for information exchange
  systems with limited observations,'' \emph{{IEEE} J. Sel. Areas Commun.},
  vol.~30, no.~11, pp. 2126--2135, Dec. 2012.

\bibitem{Zhang12}
Y.~Zhang and M.~van~der Schaar, ``Reputation-based incentive protocols in
  crowdsourcing applications,'' in \emph{Proc. IEEE INFOCOM}, Orlando, Florida,
  USA, Mar. 2012, pp. 2140--2148.

\bibitem{Carter02}
J.~Carter, E.~Bitting, and A.~A. Ghorbani, ``Reputation formalization for an
  information-sharing multi-agent system,'' in \emph{Computational
  Intelligence}, vol.~18, no.~4, 2002, pp. 515--534.

\bibitem{Yang09}
Y.~Yang, H.~Hu, J.~Xu, and G.~Mao, ``Relay technologies for wimax and
  lte-advanced mobile systems,'' \emph{{IEEE} Commun. Mag.}, vol.~47, no.~10,
  pp. 100--105, Oct. 2009.

\bibitem{Aliprantis07}
C.~Aliprantis, G.~Camera, and D.~Puzzello, ``A random matching theory,''
  \emph{Games Econ. Behav.}, vol.~59, pp. 1--16, 2007.

\bibitem{Ellison94}
G.~Ellison, ``Cooperation in the prisoner's dilemma with anonymous random
  matching,'' \emph{Rev. Econ. Stud.}, vol.~61, no.~3, pp. 567--588, 1994.

\bibitem{Shoham08}
Y.~Shoham and K.~Leyton-Brown, \emph{Multiagent Systems: Algorithmic, Game
  Theoretic and Logical Foundations}.\hskip 1em plus 0.5em minus 0.4em\relax
  Cambridge University Press, 2008.

\bibitem{Xu14}
J.~Xu, Y.~Song, and M.~van~der Schaar, ``Sharing in networks of strategic
  agents,'' \emph{IEEE J. Sel. Topics in Signal Process.}, vol.~8, no.~4, pp.
  717--731, Aug. 2014.

\bibitem{Simon55}
H.~A. Simon, ``A behavioral model of rational choice,'' \emph{Quart. J. Econ.},
  vol.~69, no.~1, pp. 99--118, Feb. 1955.

\bibitem{Park09}
H.~Park and M.~van~der Schaar, ``On the impact of bounded rationality in
  peer-to-peer networks,'' \emph{IEEE Signal Process. Lett.}, vol.~16, no.~8,
  pp. 675--678, Aug. 2009.

\bibitem{Osborne94}
M.~J. Osborne and A.~Rubinstein, \emph{A Course in Game Theory}.\hskip 1em plus
  0.5em minus 0.4em\relax Cambridge, MA, 1994.

\bibitem{Rapaport65}
A.~Rapaport and A.~Chammah, \emph{The Prisoner’s Dilemma}.\hskip 1em plus
  0.5em minus 0.4em\relax Univ. of Michigan Press, 1965.

\bibitem{Jurca04}
R.~Jurca and B.~Faltings, ``Eliciting truthful feedback for binary reputation
  mechanisms,'' in \emph{Proc. IEEE/WIC/ACM International Conference on Web
  Intelligence}, Beijing, China, Sep. 2004, pp. 214--220.

\bibitem{Fan05}
M.~Fan, Y.~Tan, and A.~B. Whinston, ``Evaluation and design of online
  cooperative feedback mechanisms for reputation management,'' \emph{{IEEE}
  Trans. Knowl. Data Eng.}, vol.~17, no.~2, pp. 244--254, Feb. 2005.

\bibitem{Houser06}
D.~Houser and J.~Wooders, ``Reputation in auctions: Theory, and evidence from
  ebay,'' \emph{J. Econom. Manage. Strat.}, vol.~15, no.~2, pp. 353--369, Jun.
  2006.

\bibitem{Audun07}
A.~J{\o}sang, R.~Ismail, and C.~Boyd, ``A survey of trust and reputation
  systems for online service provision,'' \emph{Decis. Support Syst.}, vol.~43,
  no.~2, pp. 618--644, Mar. 2007.

\bibitem{Sayed08}
A.~H. Sayed, \emph{Adaptive Filters}.\hskip 1em plus 0.5em minus 0.4em\relax
  Wiley, NJ, 2008.

\bibitem{Nayyar13}
A.~Nayyar, A.~Gupta, C.~Langbort, and T.~Basar, ``Common information based
  {M}arkov perfect equilibria for stochastic games with asymmetric information:
  Finite games,'' \emph{{IEEE} Trans. Autom. Control}, vol.~59, no.~3, pp.
  555--570, Mar. 2014.

\bibitem{Wyatt90}
G.~Wyatt, ``Risk-taking and risk-avoiding behavior -- {T}he impact of some
  dispositional and situational variables,'' \emph{Journal of Psychology}, vol.
  124, pp. 437--447, 1990.

\bibitem{Rort10}
{\O}.~L. R{\o}rtveit, J.~H. Hus{\o}y, and A.~H. Sayed, ``Diffusion {LMS} with
  communication constraints,'' in \emph{Proc. 44th Asilomar Conference on
  Signals}, Nov. 2010, pp. 1645--1649.

\bibitem{Chen13}
J.~Chen and A.~H. Sayed, ``Distributed {P}areto optimization via diffusion
  strategies,'' \emph{IEEE J. Sel. Topics in Signal Process.}, vol.~7, no.~2,
  pp. 205--220, Apr. 2013.

\bibitem{Zhao12}
X.~Zhao and A.~H. Sayed, ``Performance limits for distributed estimation over
  {LMS} adaptive networks,'' \emph{{IEEE} Trans. Signal Process.}, vol.~60,
  no.~10, pp. 5107--5124, Oct. 2012.

\bibitem{Chen151}
J.~Chen and A.~H. Sayed, ``On the learning behavior of adaptive networks ---
  {Part I: Transient analysis},'' \emph{{IEEE} Trans. Inf. Theory}, vol.~61,
  no.~6, pp. 3487--3517, Jun. 2015.

\bibitem{Chen152}
------, ``On the learning behavior of adaptive networks --- {Part II:
  Performance analysis},'' \emph{{IEEE} Trans. Inf. Theory}, vol.~61, no.~6,
  pp. 3518--3548, Jun. 2015.

\bibitem{Zhao15}
X.~Zhao and A.~H. Sayed, ``Asynchronous adaptation and learning over networks
  -- {P}art {I}: Modeling and stability analysis,'' \emph{{IEEE} Trans. Signal
  Process.}, vol.~63, no.~4, pp. 811--826, Feb. 2015.

\bibitem{Papoulis02}
A.~Papoulis and S.~U. Pillai, \emph{Probability, Random Variables and
  Stochastic Processes}.\hskip 1em plus 0.5em minus 0.4em\relax McGraw-Hill,
  2002.

\end{thebibliography}
	
	\begin{IEEEbiography}[{\includegraphics[width=1in,height=1.25in,clip,keepaspectratio]{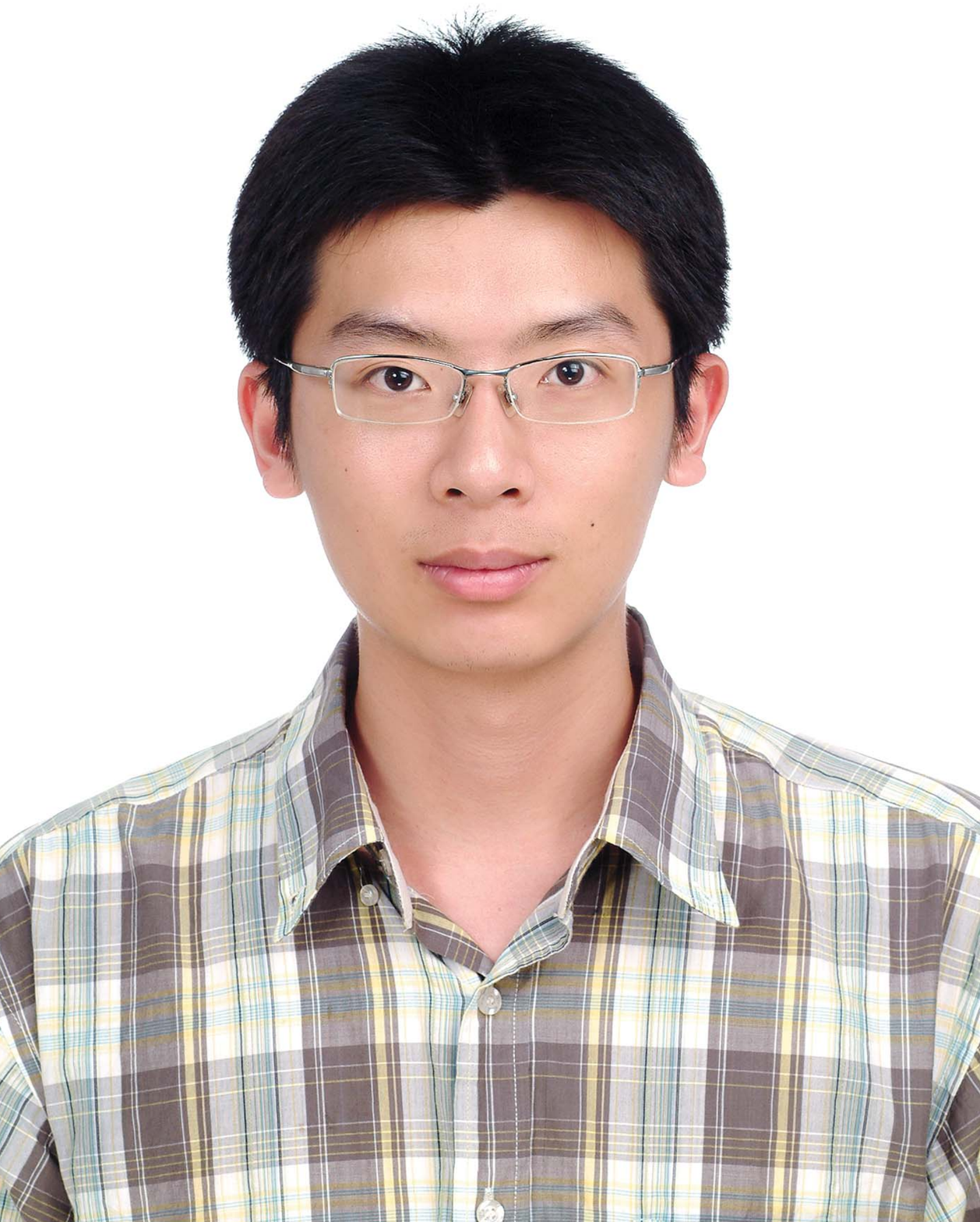}}]{Chung-Kai Yu}
		(S'08) received the B.S. and MS degrees in electrical engineering from the National Taiwan University (NTU), Taiwan, in 2006 and 2008, respectively.
		
		From 2008 to 2011, he was a Research Assistant in the Wireless Broadband Communication System Laboratory at NTU. He is currently working toward the Ph.D. degree in electrical engineering at the University of California, Los Angeles. His research interests include game-theoretic learning, adaptive networks, and distributed optimization.
	\end{IEEEbiography}
	
	\begin{IEEEbiography}[{\includegraphics[width=1in,height=1.25in,clip,keepaspectratio]{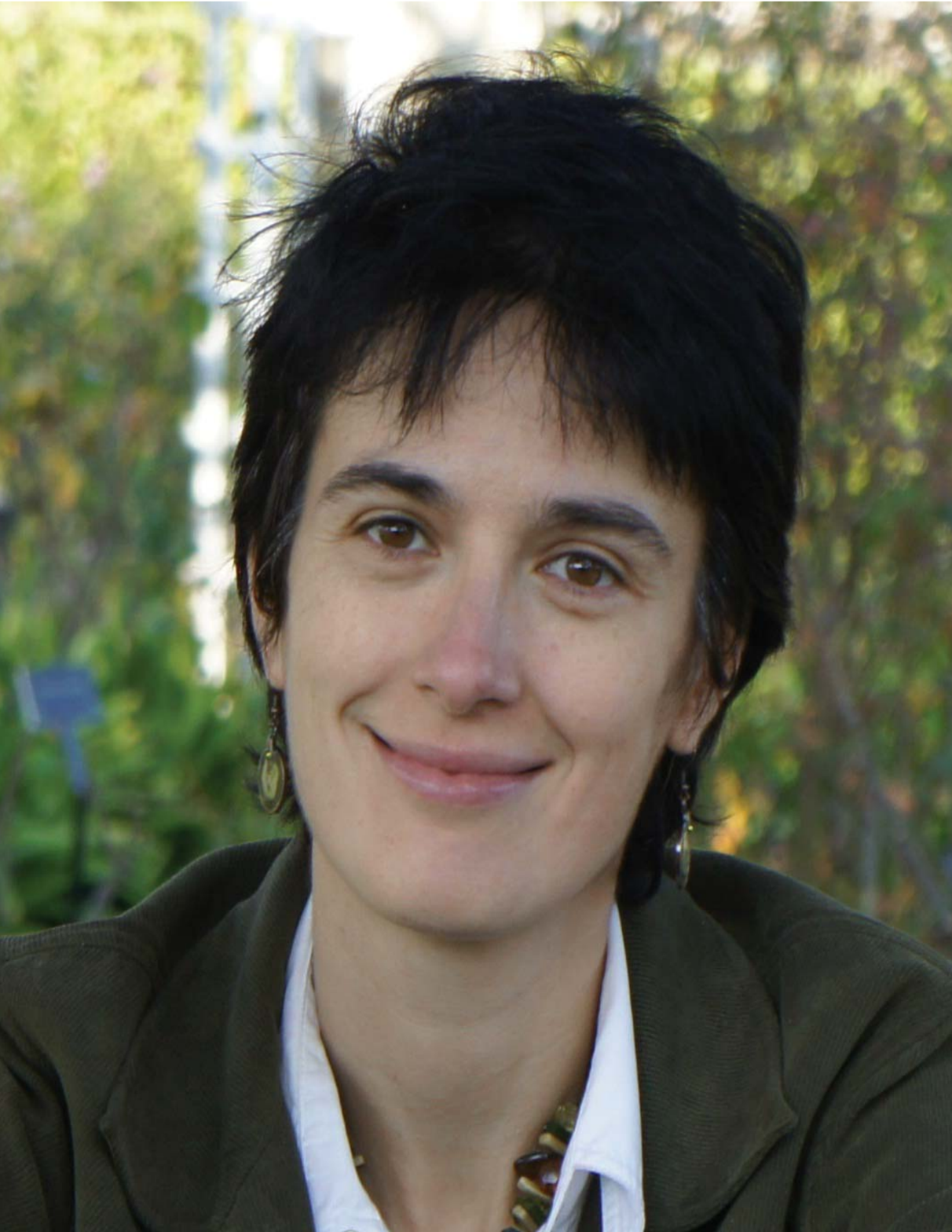}}]{Mihaela van der Schaar}
		(F'10) is Chancellor's Professor of Electrical Engineering at University of California, Los Angeles. 
		She is an IEEE Fellow since 2010, a Distinguished Lecturer of the Communications Society for 2011-2012, the Editor in Chief of IEEE Transactions on Multimedia (2011 - 2013).  She received an NSF CAREER Award (2004), the Best Paper Award from IEEE Transactions on Circuits and Systems for Video Technology (2005), the Okawa Foundation Award (2006), the IBM Faculty Award (2005, 2007, 2008), the Most Cited Paper Award from EURASIP: Image Communications Journal (2006), theGamenets Conference Best Paper Award (2011) and the 2011 IEEE Circuits and Systems Society Darlington Award Best Paper Award. She holds 33 granted US patents. She is also the founding and managing director of the UCLA Center for Engineering Economics, Learning, and Networks (see http://netecon.ee.ucla.edu). For more information about her research visit:  http://medianetlab.ee.ucla.edu/
		
	\end{IEEEbiography}
	
	\begin{IEEEbiography}[{\includegraphics[width=1in,height=1.25in,clip,keepaspectratio]{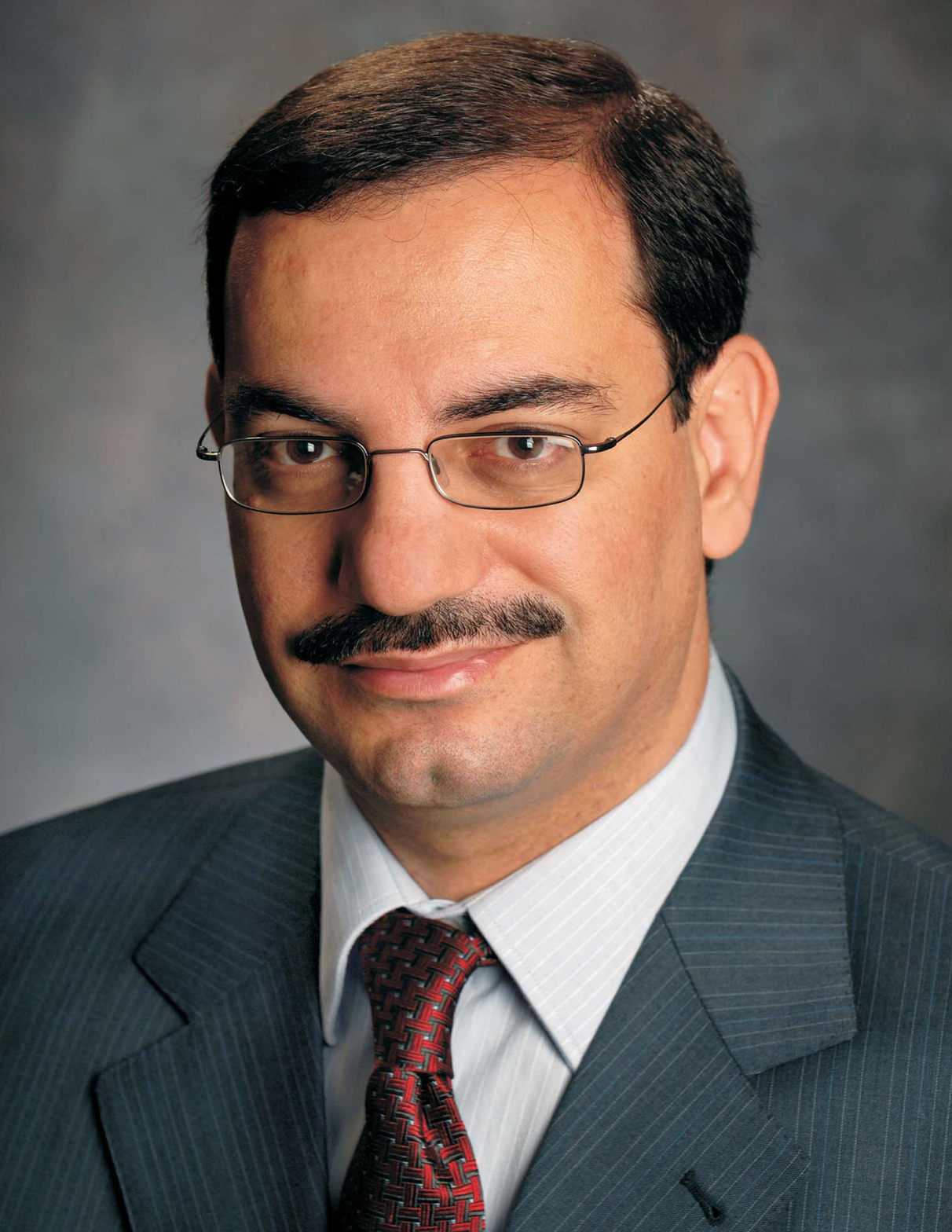}}]{Ali H. Sayed}
		(S'90-M'92-SM'99-F'01) is professor and former chairman of electrical engineering at the University of California, Los Angeles, USA, where he directs the UCLA Adaptive Systems Laboratory. An author of more than 440 scholarly publications and six books, his research involves several areas including adaptation and learning, statistical signal processing, distributed processing, network science, and biologically inspired designs.
		
		Dr. Sayed has received several awards including the 2014 Athanasios Papoulis Award from the European Association for Signal Processing, the 2013 Meritorious Service Award, and the 2012 Technical Achievement Award from the IEEE Signal Processing Society. Also, the 2005 Terman Award from the American Society for Engineering Education, the 2003 Kuwait Prize, and the 1996 IEEE Donald G. Fink Prize. He served as Distinguished Lecturer for the IEEE Signal Processing Society in 2005 and as Editor-in Chief of the IEEE TRANSACTIONS ON SIGNAL PROCESSING (2003–2005). His articles received several Best Paper Awards from the IEEE Signal Processing Society (2002, 2005, 2012, 2014). He is a Fellow of the American Association for the Advancement of Science (AAAS). He is recognized as a Highly Cited Researcher by Thomson Reuters.		
	\end{IEEEbiography}

\end{document}